\def\ps@pprintTitle{%
 \let\@oddhead\@empty
 \let\@evenhead\@empty
 \def\@oddfoot{\centerline{\thepage}}%
 \let\@evenfoot\@oddfoot}
\renewcommand{\mod}[1]{\;\mathrm{mod}\; #1}
\newcommand{\N}{\mathbb{N}}
\newcommand{\abs}[1]{\left\lvert #1 \right\rvert}
\newcommand{\NP}{\ensuremath{\mathsf{NP}}\xspace}
\newcommand{\EDCC}{E$\Delta$CC{} }
\renewcommand{\epsilon}{\varepsilon}
\renewcommand{\N}{\mathbb{N}}
\newtheorem{theorem}{Theorem}
\newtheorem{proposition}[theorem]{Proposition}
\newtheorem{corollary}[theorem]{Corollary}
\newtheorem{claim}[theorem]{Claim}
\newtheorem{remark}[theorem]{Remark}
\begin{document}

\begin{frontmatter}

\title{A Greedy Algorithm for the Social Golfer and the Oberwolfach Problem}

\author[mymainaddress]{Daniel Schmand}
\ead{schmand@uni-bremen.de}

\author[mythirdaddress]{Marc Schr\"{o}der}
\ead{m.schroder@maastrichtuniversity.nl}

\author[mysecondaryaddress]{Laura Vargas Koch}
\ead{laura.vargas@oms.rwth-aachen.de}

\address[mymainaddress]{Center for Industrial Mathematics, University of Bremen, Bremen, Germany}
\address[mythirdaddress]{School of Business and Economics, Maastricht University, Maastricht, Netherlands}
\address[mysecondaryaddress]{School of Business and Economics, RWTH Aachen University, Aachen, Germany}

\begin{abstract}
Inspired by the increasing popularity of Swiss-system tournaments in sports, we study the problem of predetermining the number of rounds that can be guaranteed in a Swiss-system tournament. Matches of these tournaments are usually determined in a myopic round-based way dependent on the results of previous rounds. Together with the hard constraint that no two players meet more than once during the tournament, at some point it might become infeasible to schedule a next round. For tournaments with $n$ players and match sizes of $k\geq2$ players, we prove that we can always guarantee $\lfloor \frac{n}{k(k-1)} \rfloor$ rounds. We show that this bound is tight. This provides a simple polynomial time constant factor approximation algorithm for the social golfer problem.

We extend the results to the Oberwolfach problem. We show that a simple greedy approach guarantees at least $\lfloor \frac{n+4}{6} \rfloor$ rounds for the Oberwolfach problem. This yields a polynomial time $\frac{1}{3+\epsilon}$-approximation algorithm for any fixed $\epsilon>0$ for the Oberwolfach problem.
Assuming that El-Zahar's conjecture is true, we improve the bound on the number of rounds to be essentially tight.

\end{abstract}

\begin{keyword}
OR in Sports \sep Graph Theory \sep Combinatorial Optimization
\end{keyword}

\end{frontmatter}

\section{Introduction}
Swiss-system tournaments received a highly increasing consideration in the last years and are implemented in various professional and amateur tournaments in, e.g., badminton, bridge, chess, e-sports and card games. A Swiss-system tournament is a non-eliminating tournament format that features a predetermined number of rounds of competition. Assuming an even number of participants, each player plays exactly one other player in each round and two players play each other at most once in a tournament. The number of rounds is predetermined and publicly announced. The actual planning of a round usually depends on the results of the previous rounds to generate as attractive matches as possible and highly depends on the considered sport. Typically, players with the same numbers of wins in the matches played so far are paired, if possible.

Tournament designers usually agree on the fact that one should have at least $\log (n)$ rounds in a tournament with $n$ participants to ensure that there cannot be multiple players without a loss in the final rankings. \citet{appleton1995may} even mentions playing $\log (n) + 2$ rounds, so that a player may lose once and still win the tournament.

In this work, we examine a bound on the number of rounds that can be \emph{guaranteed} by tournament designers. Since the schedule of a round depends on the results of previous rounds, it might happen that at some point in the tournament, there is no next round that fulfills the constraint that two players play each other at most once in a tournament. This raises the question of how many rounds a tournament organizer can announce before the tournament starts while being sure that this number of rounds can always be scheduled. We provide bounds that are \emph{independent} of the results of the matches and the detailed rules for the setup of the rounds.

We model the feasible matches of a tournament with $n$ participants as an undirected graph with $n$ vertices. A match that is feasible in the tournament corresponds to an edge in the graph. Assuming an even number of participants, one round of the tournament corresponds to a perfect matching in the graph. After playing one round we can delete the corresponding perfect matching from the set of edges to keep track of the matches that are still feasible. We can guarantee the existence of a next round in a Swiss-system tournament if there is a perfect matching in the graph. The largest number of rounds that a tournament planner can guarantee is equal to the largest number of perfect matchings that a greedy algorithm is guaranteed to delete from the complete graph. Greedily deleting perfect matchings models the fact that rounds cannot be preplanned or adjusted later in time.

Interestingly, the results imply that infeasibility issues can arise in some state-of-the-art rules for table-tennis tournaments in Germany. There is a predefined amateur tournament series with more than 1000 tournaments per year that \emph{guarantees} the 9 to 16 participants 6 rounds in a Swiss-system tournament~\citep{httvCup}. We can show that a tournament with 10 participants might become infeasible after round 5, even if these rounds are scheduled according to the official tournament rules. Infeasible means that no matching of the players who have not played before is possible anymore and thus no rule-consistent next round. For more details, see \citet{Kuntz:Thesis:2020}. Remark~\ref{rem:extend} shows that tournament designers could \emph{extend} the lower bound from 5 to 6, by choosing the fifth round carefully.

We generalize the problem to the famous social golfer problem in which not $2$, but $k\geq 3$ players compete in each match of the tournament, see~\citet{csplib:prob010}. We still assume that each pair of players can meet at most once during the tournament. A famous example of this setting is Kirkman's schoolgirl problem \citep{kirkman1850note}, in which fifteen girls walk to school in rows of three for seven consecutive days such that no two girls walk in the same row twice.
In addition to the theoretical interest in this question, designing golf tournaments with a fixed size of the golf groups that share a hole is a common problem in the state of the art design of golfing tournaments, see e.g.,~\citet{golf}. Another application of the social golfer problem is the Volleyball Nations league. Here 16 teams play a round-robin tournament. To simplify the organisation, they repeatedly meet in groups of four at a single location and play all matches within the group. Planning which teams to group together and invite to a single location is an example of the social golfer problem. See \citet{volleyball}.

In graph-theoretic terms, a round in the social golfer problem corresponds to a set of vertex-disjoint cliques of size $k$ that contains every vertex of the graph exactly once. In graph theory, a feasible round of the social golfer problem is called a clique-factor. 
We address the question of how many rounds can be guaranteed if clique-factors, where each clique has a size of $k$, are greedily deleted from the complete graph, i.e., without any preplanning.

A closely related problem is the Oberwolfach problem. In the Oberwolfach problem, we seek to find seating assignments for multiple diners at round tables in such a way that two participants sit next to each other exactly once. Half-jokingly, we use the fact that seatings at Oberwolfach seminars are assigned greedily, and study the greedy algorithm for this problem. Instead of deleting clique-factors, the algorithm now iteratively deletes a set of vertex-disjoint cycles that contains every vertex of the graph exactly once. Such a set is called a cycle-factor. We restrict attention to the special case of the Oberwolfach problem in which all cycles have the same length $k$. We analyze how many rounds can be guaranteed if cycle-factors, in which each cycle has length $k$, are greedily deleted from the complete graph.

\subsection*{Our Contribution} Motivated by applications in sports, the social golfer problem, and the Oberwolfach problem, we study the greedy algorithm that iteratively deletes a clique, respectively cycle, factor in which all cliques/cycles have a fixed size $k$, from the complete graph. We prove the following main results for complete graphs with $n$ vertices for $n$ divisible by $k$.
\begin{itemize}
\item We can always delete $\lfloor n/(k(k-1))\rfloor$ clique-factors in which all cliques have a fixed size $k$ from the complete graph. In other words, the greedy procedure guarantees a schedule of $\lfloor n/(k(k-1))\rfloor$ rounds for the social golfer problem.
This provides a simple polynomial time $\frac{k-1}{2k^2-3k-1}$-approximation algorithm.
\item The bound of $\lfloor n/(k(k-1))\rfloor$ is tight, in the sense that it is the best possible bound we can guarantee for our greedy algorithm. To be more precise, we show that a tournament exists in which we can choose the first $\lfloor n/(k(k-1))\rfloor$ rounds in such a way that no additional feasible round exists. If a well-known conjecture by \citet{chen1994equitable} in graph theory is true (the conjecture is proven to be true for $k\leq 4$), then this is the unique example (up to symmetries) for which no additional round exists after $\lfloor n/(k(k-1))\rfloor$ rounds. In this case, we observe that for $n>k(k-1)$ we can always pick a different clique-factor in the last round such that an additional round can be scheduled.
\item We can always delete $\lfloor (n+4)/6\rfloor$ cycle-factors in which all cycles have a fixed size $k$, where $k\geq 3$, from the complete graph. This implies that our greedy approach guarantees to schedule $\lfloor (n+4)/6\rfloor$ rounds for the Oberwolfach problem. Moreover, the greedy algorithm can be implemented so that it is a polynomial time $\frac{1}{3+\epsilon}$-approximation algorithm for the Oberwolfach problem for any fixed $\epsilon>0$.
\item If El-Zahar's conjecture \citep{el1984circuits} is true (the conjecture is proven to be true for $k\leq 5$), we can increase the number of cycle-factors that can be deleted to $\lfloor (n+2)/4\rfloor$ for $k$ even and $\lfloor (n+2)/4-n/4k\rfloor$ for $k$ odd. Additionally, we show that this bound is essentially tight by distinguishing three different cases. In the first two cases, the bound is tight, i.e., an improvement would immediately disprove El-Zahar's conjecture. In the last case, a gap of one round remains.
\end{itemize}

\section{Preliminaries}

We follow standard notation in graph theory and for two graphs $G$ and $H$ we define an $H$-factor of $G$ as a union of vertex-disjoint copies of $H$ that contains every vertex of the graph $G$. 
For some graph $H$ and $n \in\mathbb{N}_{\geq 2}$, a \emph{tournament} with $r$ rounds is defined as a tuple $T=(H_1,\ldots, H_r)$ of $H$-factors of the complete graph $K_n$ such that each edge of $K_n$ is in at most one $H$-factor. The \emph{feasibility graph} of a tournament $T=(H_1,\ldots, H_r)$ is a graph $G = K_n \backslash \bigcup_{i \leq r} H_i$ that contains all edges that are in none of the $H$-factors.
If the feasibility graph of a tournament $T$ is empty, we call $T$ a \emph{complete tournament}.

Motivated by Swiss-system tournaments and the importance of greedy algorithms in real-life optimization problems, we study the greedy algorithm that starts with an empty tournament and iteratively extends the current tournament by an arbitrary $H$-factor in every round until no $H$-factor remains in the feasibility graph. We refer to Algorithm \ref{algo:greedy} for a formal description.

 \vspace{\baselineskip}
 \begin{algorithm}[H]
\SetAlgoLined
\SetKwInOut{Input}{Input}
\SetKwInOut{Output}{Output}
\Input{number of vertices $n$ and a graph $H$}
\Output{tournament $T$}
$G \leftarrow K_n$\\
$i \leftarrow 1$\\
 \While{there is an $H$-factor $H_i$ in $G$}{
    delete $H_i$ from $G$\\
    $i \leftarrow i+1$\\ 
  }
 \Return $T=(H_1,\dots,H_{i-1})$
 
\caption{Greedy tournament scheduling}
\label{algo:greedy}
\end{algorithm}
\vspace{\baselineskip}
 
\subsection*{Greedy Social Golfer Problem}

In the greedy social golfer problem we consider tournaments with $H=K_k$, for $k\geq 2$, where $K_k$ is the complete graph with $k$ vertices and all $\frac{k(k-1)}{2}$ edges. The greedy social golfer problem asks for the minimum number of rounds of a tournament computed by Algorithm \ref{algo:greedy}, as a function of $n$ and $k$. The solution of the greedy social golfer problem is a guarantee on the number of $K_k$-factors that can be iteratively deleted from the complete graph without any preplanning.
For sports tournaments this corresponds to $n$ players being assigned to rounds with matches  of size $k$ such that each player is in exactly one match per round and each pair of players meets at most once in the tournament. 

\subsection*{Greedy Oberwolfach Problem}
In the greedy Oberwolfach problem we consider tournaments with $H=C_k$, for $k\geq 3$, where $C_k$ is the cycle graph with $k$ vertices and $k$ edges. The greedy Oberwolfach problem asks for the minimum number of rounds calculated by Algorithm~\ref{algo:greedy}, given $n$ and $k$. This corresponds to a guarantee on the number of $C_k$-factors that can always be iteratively deleted from the complete graph without any preplanning.\\

Observe that for $k= 3$, both problems are equivalent.
To avoid trivial cases, we assume throughout the paper that $n$ is divisible by $k$. This is a necessary condition for the existence of a \emph{single} round. Usually, in real-life sports tournaments, additional dummy players are added to the tournament if $n$ is not divisible by $k$. The influence of dummy players on the tournament planning strongly depends on the sport. There are sports, like e.g.,\ golf or karting where matches can still be played with less than $k$ players, or others where the match needs to be cancelled if one player is missing, for example beach volleyball or tennis doubles. Thus, the definition of a best possible round if $n$ is not divisible by $k$ depends on the application. We exclude the analysis of this situation from this work to ensure a broad applicability of our results and focus on the case $n \equiv 0 \mod k$.

\subsection{Related Literature}

For matches with $k=2$ players, \cite{rosa1982premature} studied the question whether a given tournament can be extended to a round-robin tournament. This question was later solved by \cite{Csaba2016} as long as the graph is sufficiently large. They showed that even if we apply the greedy algorithm for the first $n/2-1$ rounds, the tournament can be extended to a complete tournament by choosing all subsequent rounds carefully.

\cite{cousins1975maximal} asked the question of how many rounds can be guaranteed to be played in a Swiss-system tournament for the special case $k=2$. They showed that $\frac{n}{2}$ rounds can be guaranteed. Our result of $\left\lfloor\frac{n}{k(k-1)}\right\rfloor$ rounds for the social golfer problem is a natural generalization of this result. \cite{rees1991spectrum} investigated in more detail after how many rounds a Swiss-system tournament can get stuck.

For a match size of $k\geq 2$ players, the original \emph{social golfer problem} with $n\geq 2$ players asks whether a complete tournament with $H=K_k$ exists. For $H=K_2$, such a complete tournament coincides with a round-robin tournament. Round-robin tournaments are known to exist for every even number of players. Algorithms to calculate such schedules are known for more than a century due to \citet{schurig1886}. For a more recent survey on round-robin tournaments, we refer to \cite{rasmussen2008round}.

For $H=K_k$ and $k\geq 2$, complete tournaments are also known as resolvable balanced incomplete block designs (resolvable-BIBDs). To be precise, a \emph{resolvable-BIBD} with parameters $(n,k,1)$ is a collection of subsets (blocks) of a finite set $V$ with $|V|=n$ elements with the following properties:
\begin{enumerate}
\item Every pair of distinct elements $u,v$ from $V$ is contained in exactly one block.
\item Every block contains exactly $k$ elements.
\item The blocks can be partitioned into rounds $R_1, R_2, \ldots , R_r$ such that each element of $V$ is contained in exactly one block of each round. 
\end{enumerate}

Notice that a round in a resolvable-BIBD corresponds to an $H$-factor in the social golfer problem.
Similar to the original social golfer problem, a resolvable-BIBD consists of $(n-1)/(k-1)$ rounds. For the existence of a resolvable-BIBD the conditions $n \equiv 0 \mod{k}$ and $n-1 \equiv 0 \mod{k-1}$ are clearly necessary. For $k=3$, \citet{ray1971solution} proved that these two conditions are also sufficient. Later, \citet{hanani1972resolvable} proved the same result for $k=4$. In general, these two conditions are not sufficient (one of the smallest exceptions being $n=45$ and $k=5$), but \citet{ray1973existence} showed that they are asymptotically sufficient, i.e., for every $k$ there exist a constant $c(k)$ such that the two conditions are sufficient for every $n$ larger than $c(k)$. These results immediately carry over to the existence of a \emph{complete} tournament with $n$ players and $H=K_k$.

Closely related to our problem is the question of the existence of graph factorizations. An $H$-factorization of a graph $G$ is collection of $H$-factors that exactly cover the whole graph $G$. For an overview of graph theoretic results, we refer to \cite{yuster2007combinatorial}. \cite{condon2019bandwidth} looked at the problem of maximizing the number of $H$-factors when choosing rounds carefully. For our setting, their results imply that in a sufficiently large graph one can always schedule rounds such that the number of edges remaining in the feasibility graph is an arbitrary small fraction of edges. Notice that the result above assumes that we are able to preplan the whole tournament. In contrast, we plan rounds of a tournament in an online fashion depending on the results in previous rounds.

In greedy tournament scheduling, Algorithm~\ref{algo:greedy} greedily adds one round after another to the tournament, and thus \emph{extends} a given tournament step by step. The study of the existence of another feasible round in a given tournament with $H=K_k$ is related to the existence of an equitable graph-coloring. Given some graph $G=(V,E)$, an $\ell$-coloring
is a function $f: V \rightarrow \{1, \ldots, \ell \}$, such that $f(u) \neq f(v)$ for all edges $(u,v) \in E$. An \emph{equitable $\ell$-coloring} is an $\ell$-coloring, where the number of vertices in any two color classes differs by at most one, i.e., $|\{ v | f(v)=i \}| \in \{ \left\lfloor \frac{n}{\ell} \right\rfloor , \left\lceil \frac{n}{\ell} \right\rceil\}$ for every color $i \in \{1, \ldots , \ell \}$.
 
To relate equitable colorings of graphs to the study of the extendability of tournaments, we consider the complement graph $\bar{G}$ of the feasibility graph $G=(V,E)$, as defined by
$\bar{G}=K_n \backslash E$. Notice that a color class in an equitable coloring of the vertices of $\bar{G}$ is equivalent to a clique in $G$. In an equitable coloring of $\bar{G}$ with $\frac{n}{k}$ colors, each color class has the same size, which is $k$. Thus, finding an equitable $\frac{n}{k}$-coloring in $\bar{G}$ is equivalent to finding a $K_k$-factor in $G$ and thus an extension of the tournament. Questions on the existence of an equitable coloring dependent on the vertex degrees in a graph have already been considered by \citet{erdos1964problem}, who posed a conjecture on the existence of equitable colorings in low degree graphs, that has been proven by \citet{hajnal1970proof}. Their proof was simplified by \citet{kierstead2010fast}, who also gave a polynomial time algorithm to find an equitable coloring. In general graphs, the existence of clique-factors with clique size equal to $3$ \citep[][Sec.~3.1.2]{garey1979computers} and at least $3$ \citep{kirkpatrick1978completeness,kirkpatrick1983complexity,hell1984packings} is known to be NP-hard.
 
 The maximization variant of the social golfer problem for $n$ players and $H=K_k$ asks for a schedule which lasts as many rounds as possible. It is mainly studied in the constraint programming community using heuristic approaches \citep{dotu2005scheduling, triska2012effective,triska2012improved, liu2019social}. Our results give lower bounds for the maximization variant using a very simple greedy algorithm.

For $n$ players and table sizes $k_1, \ldots, k_{\ell}$ with $n=k_1 + \ldots +k_{\ell}$, the (classical) \emph{Oberwolfach problem} can be stated as follows. Defining $\tilde{H} = \bigcup_{i\leq \ell} C_{k_i}$ the problem asks for the existence of a tournament of $n$ players, with $H=\tilde{H}$ which has $(n-1)/2$ rounds. Note that the Oberwolfach problem does not ask for such an assignment but only for existence. While the general problem is still open, several special cases have been solved. Assuming $k=k_1=\ldots=k_{\ell}$, \citet{alspach1989oberwolfach} showed existence for all odd $k$ and all $n$ odd with $n \equiv 0 \mod{k}$. For $k$ even, \citet{alspach1989oberwolfach} and \citet{hoffman1991existence} analyzed a slight modification of the Oberwolfach problem and showed that there is a tournament, such that the corresponding feasibility graph $G$ is not empty, but equal to a perfect matching for all even $n$ with $n \equiv 0 \mod{k}$. 
Recently, the Oberwolfach problem was solved for large $n$, see \cite{glock2021resolution}, and for small $n$, see \cite{salassa}.

\citet{liu2003equipartite} studied a variant of the Oberwolfach problem in bipartite graphs and gave conditions under which the existence of a complete tournament is guaranteed.

A different optimization problem inspired by finding feasible seating arrangements subject to some constraints is given by \cite{estanislaomeunier}.

The question of extendability of a given tournament with $H=C_k$ corresponds to the covering of the feasibility graph with cycles of length $k$. Covering graphs with cycles is already studied since \citet{petersen1891theorie}. The problem of finding a set of cycles of arbitrary lengths covering a graph (if one exists) is polynomially solvable \citep{edmonds1970matching}. However, if certain cycle lengths are forbidden, the problem is NP-complete \citep{hell1988restricted}.

\subsection{Example}
Consider the example of a tournament with $n=6$ and $H=K_2$ depicted in Figure \ref{fig:exa}. The coloring of the edges in the graph on the left represents three rounds $H_1, H_2, H_3$. The first round $H_1$ is depicted by the set of red edges. Each edge corresponds to a match. In the second round, all blue edges are played. The third round $H_3$ consists of all green edges. After these three rounds, the feasibility graph $G$ of the tournament is depicted on the right side of the figure. We cannot feasibly schedule a next round as there is no perfect matching in $G$. Equivalently, we can observe that the tournament with $3$ rounds cannot be extended, since there is no equitable $3$-coloring in $\bar{G}$, which is depicted on the left of Figure~\ref{fig:exa}.

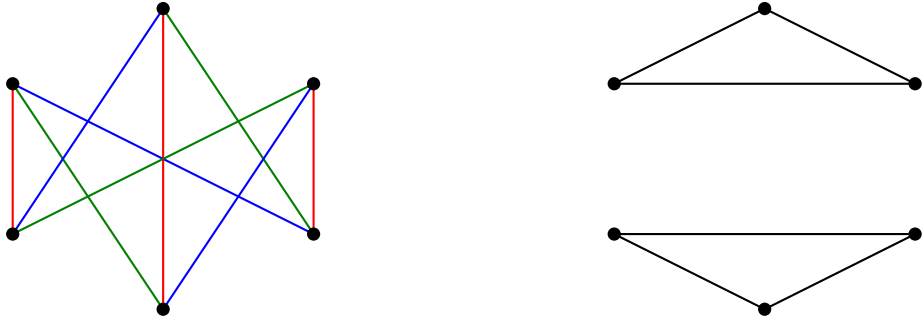
\begin{figure}[t]
\begin{minipage}{0.55\textwidth}
\centering
\begin{tikzpicture}
    \draw[thick,red] (-2,1) -- (-2,-1);
    \draw[thick,green!50!black] (-2,1) -- (0,-2);
    \draw[thick,blue] (-2,1) -- (2,-1);
    \draw[thick,blue] (0,2) -- (-2,-1);
    \draw[thick,red] (0,2) -- (0,-2);
    \draw[thick,green!50!black] (0,2) -- (2,-1);
    \draw[thick,green!50!black](2,1) -- (-2,-1);
    \draw[thick,blue] (2,1) -- (0,-2);
    \draw[thick,red] (2,1) -- (2,-1);
    \node[circle,fill=black,inner sep=0pt,minimum size=5pt] at (-2,1){};
    \node[circle,fill=black,inner sep=0pt,minimum size=5pt] at (0,2){};
    \node[circle,fill=black,inner sep=0pt,minimum size=5pt] at (2,1){};
    \node[circle,fill=black,inner sep=0pt,minimum size=5pt] at (-2,-1){};
    \node[circle,fill=black,inner sep=0pt,minimum size=5pt] at (0,-2){};
    \node[circle,fill=black,inner sep=0pt,minimum size=5pt] at (2,-1){};
    \node at (0,-2.5) {};
\end{tikzpicture}
\end{minipage}
\begin{minipage}{0.4\textwidth}
\centering
\begin{tikzpicture}
    \draw[thick] (-2,1) -- (0,2);
    \draw[thick] (-2,1) -- (2,1);
    \draw[thick] (0,2) -- (2,1);
    \draw[thick] (-2,-1) -- (0,-2);
    \draw[thick] (-2,-1) -- (2,-1);
    \draw[thick] (0,-2) -- (2,-1);
    \node[circle,fill=black,inner sep=0pt,minimum size=5pt] at (-2,1){};
    \node[circle,fill=black,inner sep=0pt,minimum size=5pt] at (0,2){};
    \node[circle,fill=black,inner sep=0pt,minimum size=5pt] at (2,1){};
    \node[circle,fill=black,inner sep=0pt,minimum size=5pt] at (-2,-1){};
    \node[circle,fill=black,inner sep=0pt,minimum size=5pt] at (0,-2){};
    \node[circle,fill=black,inner sep=0pt,minimum size=5pt] at (2,-1){};
    \node at (0,-2.5){};
\end{tikzpicture}
\end{minipage}

\caption{Consider a tournament with 6 participants and $H=K_2$. The left figure corresponds to three rounds, where each color denotes the matches of one round. The right figure depicts the feasibility graph after these three rounds.}\label{fig:exa}
\end{figure}

On the other hand there is a tournament with $n=6$ and $H=K_2$ that consists of $5$ rounds. The corresponding graph is depicted in Figure~\ref{fig:com}. Since this is a complete tournament, the example is a resolvable-BIBD with parameters $(6,2,1)$. The vertices of the graph correspond to the finite set $V$ of the BIBD and the colors in the figure correspond to the rounds in the BIBD. Note that these examples show that there is a complete tournament with $n=6$ and $H=K_2$, where $5$ rounds are played while the greedy algorithm can get stuck after $3$ rounds. In the remainder of the paper, we aim for best possible bounds on the number of rounds that can be guaranteed by using the greedy algorithm.

 \begin{figure}[t]
\centering
\begin{tikzpicture}
    \draw[thick,red] (-2,1) -- (0,2);
    \draw[thick] (-2,1) -- (2,1);
    \draw[very thick,yellow!90!black] (0,2) -- (2,1);
    \draw[thick,red] (-2,-1) -- (0,-2);
    \draw[thick] (-2,-1) -- (2,-1);
    \draw[very thick,yellow!90!black] (0,-2) -- (2,-1);
    \draw[very thick,yellow!90!black] (-2,1) -- (-2,-1);
    \draw[thick,green!50!black] (-2,1) -- (0,-2);
    \draw[thick,blue] (-2,1) -- (2,-1);
    \draw[thick,blue] (0,2) -- (-2,-1);
    \draw[thick] (0,2) -- (0,-2);
    \draw[thick,green!50!black] (0,2) -- (2,-1);
    \draw[thick,green!50!black](2,1) -- (-2,-1);
    \draw[thick,blue] (2,1) -- (0,-2);
    \draw[thick,red] (2,1) -- (2,-1);
    \node[circle,fill=black,inner sep=0pt,minimum size=5pt] at (-2,1){};
    \node[circle,fill=black,inner sep=0pt,minimum size=5pt] at (0,2){};
    \node[circle,fill=black,inner sep=0pt,minimum size=5pt] at (2,1){};
    \node[circle,fill=black,inner sep=0pt,minimum size=5pt] at (-2,-1){};
    \node[circle,fill=black,inner sep=0pt,minimum size=5pt] at (0,-2){};
    \node[circle,fill=black,inner sep=0pt,minimum size=5pt] at (2,-1){};
\end{tikzpicture}
\caption{A complete tournament with 6 players and 5 rounds, in which each color represents the matches of a round.}\label{fig:com}
\end{figure}
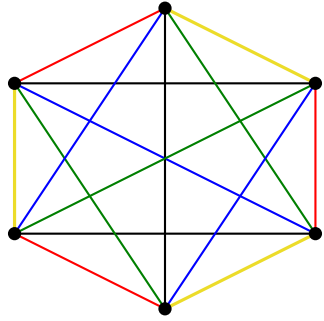

\subsection{Outline}
The paper is structured as follows. We start with the analysis of Swiss-system tournaments to demonstrate our main ideas. To be more precise, Section \ref{sec:war} considers the setting of greedy tournament scheduling with $H=K_2$. Section \ref{sec:gsgp} then generalizes the main results for the greedy social golfer problem. Lastly, in Section \ref{sec:gop}, we obtain lower and upper bounds on the number of rounds for the greedy Oberwolfach problem.

\section{Warmup: Perfect Matchings}\label{sec:war}

Most sports tournaments consist of matches between two competing players. We therefore first consider the special case 
of a tournament with $H=K_2$. 
In this setting, the greedy social golfer problem boils down to iteratively deleting perfect matchings from the complete graph. Recall that Propositions \ref{prop:k=2} and \ref{prop:k=2l}, and Corollary \ref{corr:ndivbyfour} were already shown by \cite{cousins1975maximal}. For completeness, we have added the proofs.

First, we use Dirac's theorem to show that we can always greedily delete at least $\frac{n}{2}$ perfect matchings from the complete graph. Recall that we assume $n$ to be even to guarantee the existence of a single perfect matching.

\begin{proposition}\label{prop:k=2}
For each even $n\in\mathbb{N}$ and $H=K_2$, Algorithm~\ref{algo:greedy} outputs a tournament with at least $\frac{n}{2}$ rounds.
\end{proposition}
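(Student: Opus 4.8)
The plan is to control the feasibility graph through its minimum degree and then invoke Dirac's theorem to certify that a next round always exists. The crucial structural observation is that a perfect matching on $n$ vertices is $1$-regular: it covers every vertex exactly once. Hence deleting a perfect matching lowers the degree of \emph{every} vertex by exactly one, and this is true no matter which perfect matching the greedy algorithm happened to select. Starting from $K_n$, which is $(n-1)$-regular, the feasibility graph after $j-1$ greedily deleted rounds is therefore exactly $(n-j)$-regular. This regularity being preserved is precisely what makes the bound clean.

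With this in hand I would argue that the while-loop of Algorithm~\ref{algo:greedy} cannot terminate before round $n/2$. Recall Dirac's theorem: every graph on $n\geq 3$ vertices with minimum degree at least $n/2$ contains a Hamiltonian cycle. Before round $j$ the feasibility graph is $(n-j)$-regular, so whenever $n-j \geq n/2$, equivalently $j \leq n/2$, the minimum-degree hypothesis of Dirac's theorem is satisfied and the graph contains a Hamiltonian cycle. Since $n$ is even, a Hamiltonian cycle decomposes into two perfect matchings and in particular contains one; thus a feasible $K_2$-factor $H_j$ exists and the loop proceeds to round $j+1$. Applying this for $j = 1, 2, \ldots, n/2$ shows that at least $n/2$ rounds are scheduled, as claimed.

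I do not anticipate a genuine obstacle here; the real work is getting the bookkeeping right, and two small points deserve care. First, the tightest case is $j = n/2$, where the degree drops to exactly $n/2$, so one must confirm that Dirac's theorem applies with equality in the degree bound, which it does. Second, Dirac's theorem requires $n \geq 3$, so the base case $n = 2$ should be dispatched directly: there $K_2$ is itself a perfect matching, giving one round, which equals $n/2$. Combining these observations yields that the greedy algorithm always outputs a tournament with at least $n/2$ rounds.
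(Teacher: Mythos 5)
Your proof is correct and follows essentially the same route as the paper's: track the (regular) degree of the feasibility graph, invoke Dirac's theorem to obtain a Hamiltonian cycle, and extract a perfect matching from it by alternating edges. Your explicit treatment of the $n=2$ base case (where Dirac's hypothesis $n\geq 3$ fails) is a minor refinement of care that the paper omits, but it does not change the argument.
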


\begin{proof}
Algorithm~\ref{algo:greedy} starts with an empty tournament and extends it by one round in every iteration.
To show that Algorithm~\ref{algo:greedy} runs for at least $\frac{n}{2}$ iterations we consider the feasibility graph of the corresponding tournament. Recall that the degree of each vertex in a complete graph with $n$ vertices is $n-1$. In each round, the algorithm deletes a perfect matching and thus the degree of a vertex is decreased by $1$. After at most $\frac{n}{2}-1$ rounds, the degree of every vertex is at least $\frac{n}{2}$. By Dirac's theorem \citep{dirac1952some}, a Hamiltonian cycle exists. The existence of a Hamiltonian cycle implies the existence of a perfect matching by taking every second edge of the Hamiltonian cycle. So after at most $\frac{n}{2}-1$ rounds, the tournament can be extended and the algorithm does not terminate.
\end{proof}

Second, we prove that the bound of Proposition \ref{prop:k=2} is tight by showing that there are tournaments that cannot be extended after $\frac{n}{2}$ rounds.
\begin{proposition}\label{prop:k=2l}
There are infinitely many $n \in \N$ for which there exists a tournament that cannot be extended after $\frac{n}{2}$ rounds.
\end{proposition}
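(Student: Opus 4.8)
The plan is to prove tightness by an explicit construction for the infinite family $n \equiv 2 \pmod 4$, i.e.\ $n = 2m$ with $m$ odd. The guiding idea comes directly from the example in Figure~\ref{fig:exa}: there the feasibility graph after $\frac{n}{2}$ rounds is a disjoint union of two triangles, that is, two \emph{odd} cliques, and hence has no perfect matching. I would generalize exactly this picture.

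Concretely, fix an odd integer $m \geq 3$ and set $n = 2m$. Partition the vertex set of $K_n$ into two classes $A$ and $B$ with $|A| = |B| = m$, and let $K_{m,m} \subseteq K_n$ denote the complete bipartite graph consisting of all edges between $A$ and $B$. The construction removes precisely the edges of $K_{m,m}$, spread over $m = \frac{n}{2}$ rounds. The feasibility graph that remains is then $K_n \setminus K_{m,m}$, which is the disjoint union of a clique on $A$ and a clique on $B$, each on $m$ vertices and with no edge between them.

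The key step is to verify that the $m^2$ edges of $K_{m,m}$ can indeed be partitioned into $m = \frac{n}{2}$ perfect matchings of $K_n$, so that this removal is realized by a legitimate tournament $(H_1,\dots,H_{n/2})$. Since $K_{m,m}$ is $m$-regular and bipartite, it admits a $1$-factorization; explicitly, writing $A = \{a_0,\dots,a_{m-1}\}$ and $B = \{b_0,\dots,b_{m-1}\}$, the sets $M_i = \{\,a_j b_{(j+i)\bmod m} : 0 \le j \le m-1\,\}$ for $i = 0,\dots,m-1$ are pairwise edge-disjoint, cover every edge of $K_{m,m}$ (each edge $a_j b_\ell$ lies in exactly the matching with index $i \equiv \ell - j$), and each $M_i$ saturates all of $A$ and all of $B$. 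Hence every $M_i$ is in fact a perfect matching of $K_n$, and $(M_0,\dots,M_{m-1})$ is a valid tournament with $\frac{n}{2}$ rounds whose union of edges is exactly $K_{m,m}$.

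It then remains to argue that this tournament cannot be extended, i.e.\ that its feasibility graph $K_m \sqcup K_m$ has no perfect matching. This is immediate: each component is a clique on the odd number $m$ of vertices, and there are no edges across the partition, so already Tutte's condition fails for $S = \emptyset$ (the graph has two odd components, matching the intuition that an odd clique cannot be perfectly matched internally). As $m$ ranges over all odd integers $\geq 3$, the values $n = 2m$ form an infinite family, which establishes the statement. I do not anticipate a real obstacle here; the only points requiring care are the explicit $1$-factorization of $K_{m,m}$ — a standard round-robin/Latin-square argument — and the observation that matchings of $K_{m,m}$ are automatically perfect matchings of $K_n$ precisely because they saturate both sides.
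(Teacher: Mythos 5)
Your proposal is correct and is essentially the paper's own proof: the paper also partitions the vertices into two halves $A$ and $V\setminus A$ of odd size $\tfrac{n}{2}$ and schedules round $r$ by the same cyclic rule (player $i+\tfrac{n}{2}$ meets player $i+r-1$ modulo $\tfrac{n}{2}$), which is exactly your $1$-factorization $M_i = \{a_j b_{(j+i)\bmod m}\}$ of $K_{m,m}$, leaving two disjoint odd cliques with no perfect matching. The only cosmetic difference is that you spell out the edge-disjointness and the Tutte-condition failure explicitly, which the paper leaves implicit.
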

\begin{proof}
Choose $n$ such that $\frac{n}{2}$ is odd. We describe the chosen tournament by perfect matchings in the feasibility graph $G$. Given a complete graph with $n$ vertices, we partition the vertices into a set $A$ with $|A|=\frac{n}{2}$ and $V\setminus A$ with $|V\setminus A|=\frac{n}{2}$. We denote the players in $A$ by $1,\ldots, \frac{n}{2}$ and the players in $V\setminus A$ by $\frac{n}{2}+1,\ldots,n$.

In each round $r=1,\ldots,\frac{n}{2}$, player $i+\frac{n}{2}$ is scheduled in a match with player $i+r-1$ (modulo $\frac{n}{2}$) for all $i=1,\ldots,\frac{n}{2}$. After deleting these $\frac{n}{2}$ perfect matchings, the feasibility graph $G$ consists of two disjoint complete graphs of size $\frac{n}{2}$, as every player in $A$ has played against every player in $V\setminus A$. Given that $\frac{n}{2}$ is odd, no perfect matching exists and hence the tournament cannot be extended.
\end{proof}

A natural follow-up question is to characterize those feasibility graphs that can be extended after $\frac{n}{2}$ rounds. Proposition \ref{prop:cha} answers this question and we essentially show that the provided example is the only graph structure that cannot be extended after $\frac{n}{2}$ rounds.

\begin{proposition}\label{prop:cha}
Let $T$ be a tournament of $\frac{n}{2}$ rounds with feasibility graph $G$ and its complement $\bar{G}$. Then $T$ cannot be extended if and only if $\bar{G} = K_{\frac{n}{2},\frac{n}{2}}$ and $\frac{n}{2}$ is odd.
\end{proposition}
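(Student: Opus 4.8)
The plan is to work entirely with the feasibility graph $G$ and to record first that after $\frac{n}{2}$ rounds $G$ is $\left(\frac{n}{2}-1\right)$-regular: each of the $\frac{n}{2}$ deleted perfect matchings drops every degree by one, starting from the value $n-1$ in $K_n$. As noted in the preliminaries, $T$ can be extended precisely when $G$ has a perfect matching, so the statement is equivalent to characterizing the $\left(\frac{n}{2}-1\right)$-regular graphs on $n$ vertices that have no perfect matching. The easy direction ($\Leftarrow$) is then immediate: if $\bar G = K_{\frac{n}{2},\frac{n}{2}}$ and $\frac{n}{2}$ is odd, then taking complements shows $G$ is the disjoint union of two cliques $K_{\frac{n}{2}}$ on the two color classes of $\bar G$, and since each clique has an odd number of vertices it admits no perfect matching. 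This recovers exactly the construction of Proposition~\ref{prop:k=2l}.

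For the hard direction ($\Rightarrow$) I would assume $G$ has no perfect matching and invoke Tutte's theorem to fix a set $S$ with $o(G-S) > |S|$. Writing $s = |S|$ and using that $n$ is even forces, by a parity argument, $o(G-S) \ge s+2$. Two counting estimates then pin down $s$. The crude one, using that each odd component has at least one vertex, gives $n \ge s + (s+2)$, hence $s \le \frac{n}{2}-1$ and in particular $\frac{n}{2}-s \ge 1$. The sharper one exploits regularity: a vertex in an odd component $C$ of $G-S$ has all of its $\frac{n}{2}-1$ neighbors inside $C \cup S$, so $|C| \ge \frac{n}{2}-s$. Summing over the at least $s+2$ odd components yields $n \ge s + (s+2)\!\left(\frac{n}{2}-s\right)$, which rearranges to $s\!\left(\frac{n}{2}-s-1\right) \le 0$. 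Combined with $0 \le s \le \frac{n}{2}-1$ this leaves only $s = 0$ or $s = \frac{n}{2}-1$.

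It remains to discard the boundary case and resolve $s=0$. If $s = \frac{n}{2}-1$, the counting is tight, forcing $G-S$ to consist of $\frac{n}{2}+1$ isolated vertices, each (by the degree bound) adjacent to all of $S$; but then every vertex of $S$ has degree at least $\frac{n}{2}+1 > \frac{n}{2}-1$, contradicting regularity. Hence $s=0$, so $G$ is disconnected with at least two odd components, and the bound $|C| \ge \frac{n}{2}$ forces exactly two components, each of size exactly $\frac{n}{2}$; being $\left(\frac{n}{2}-1\right)$-regular on $\frac{n}{2}$ vertices, each is a clique $K_{\frac{n}{2}}$ with $\frac{n}{2}$ odd. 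Taking complements gives $\bar G = K_{\frac{n}{2},\frac{n}{2}}$, as required.

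I expect the only delicate point to be the bookkeeping in the two counting estimates, namely establishing $\frac{n}{2}-s \ge 1$ before applying the component-size bound and drawing the parity consequence $o(G-S) \ge s+2$ correctly. Once the inequality $s\!\left(\frac{n}{2}-s-1\right)\le 0$ is in hand and the case $s=\frac{n}{2}-1$ is eliminated by the degree contradiction, the remaining identification of $G$ with $2K_{\frac{n}{2}}$ is routine.
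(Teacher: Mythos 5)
Your proof is correct, and it takes a genuinely different route from the paper's. The paper never touches Tutte's theorem: it passes to the complement $\bar G$, which after $\frac{n}{2}$ rounds is $\frac{n}{2}$-regular, and applies the Chen--Lih--Wu theorem, under which $\bar G$ is equitably $\frac{n}{2}$-colorable unless it is one of the forbidden graphs $K_m$ or $K_{2m+1,2m+1}$; since an equitable $\frac{n}{2}$-coloring of $\bar G$ is precisely a perfect matching in $G$, and the case $\bar G=K_n$ is impossible for $n>2$, the only obstruction left is $\bar G = K_{\frac{n}{2},\frac{n}{2}}$ with $\frac{n}{2}$ odd. (The connectivity hypothesis of that theorem, which the paper does not comment on, is automatic here: every component of an $\frac{n}{2}$-regular graph has more than $\frac{n}{2}$ vertices, so $\bar G$ is connected.) You instead stay inside $G$, invoke Tutte's theorem, and use $\left(\frac{n}{2}-1\right)$-regularity in two counting estimates to force the Tutte set to satisfy $s\left(\frac{n}{2}-s-1\right)\le 0$, then eliminate $s=\frac{n}{2}-1$ by the degree contradiction and read off $G=2K_{\frac{n}{2}}$ from $s=0$; all of these steps check out, including the parity step $o(G-S)\ge s+2$ and the observation that the crude bound $s\le\frac{n}{2}-1$ is what legitimizes the component-size bound $|C|\ge\frac{n}{2}-s\ge 1$. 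The trade-off is clear: your argument is elementary and self-contained apart from Tutte's classical theorem, and it produces the structure of $G$ directly; the paper's two-line deduction is less elementary but is the one that scales---the same equitable-coloring template is reused for $k\ge 3$ via the Hajnal--Szemer\'edi theorem (Theorem~\ref{thm:k>2}) and the Equitable $\Delta$-Coloring Conjecture (Proposition~\ref{prop:charOneRoundMore}), a regime where no Tutte-type characterization of clique-factors is available.
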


Before we prove the proposition we present a result by \citet{chen1994equitable}, which the proof makes use of.

\subsubsection*{Chen-Lih-Wu theorem \citep{chen1994equitable}.}
Let $G$ be a connected graph with maximum degree $\Delta(G) \geq \frac{n}{2}$. If $G$ is different from $K_m$ and $K_{2m+1,2m+1}$ for all $m\geq 1$, then $G$ is equitable $\Delta(G)$-colorable.

\begin{proof}[Proof of Proposition \ref{prop:cha}.]
If the complement of the feasibility graph $\bar{G} = K_{\frac{n}{2},\frac{n}{2}}$ with $\frac{n}{2}$ odd, we are exactly in the situation of the proof of Proposition~\ref{prop:k=2l}. To show equivalence, assume that either $\bar{G} \neq K_{\frac{n}{2},\frac{n}{2}}$ or $\frac{n}{2}$ even.
By using the Chen-Lih-Wu Theorem, we show that in this case $\bar{G}$ is equitable $\frac{n}{2}$-colorable.

After $\frac{n}{2}$ rounds, we have $\Delta(\bar{G})=\frac{n}{2}$. We observe that $\bar{G}= K_n$ if and only if $n=2$ and in this case $\bar{G} = K_{1,1}$, a contradiction.  Thus all conditions of the Chen-Lih-Wu theorem are fulfilled, and $\bar{G}$ is equitable $\frac{n}{2}$-colorable. An equitable $\frac{n}{2}$-coloring in $\bar{G}$ corresponds to a perfect matching in $G$ and hence implies that the tournament is extendable.
\end{proof}

\begin{corollary}
\label{corr:ndivbyfour}
For each even $n \in \mathbb{N}$ divisible by four and $H=K_2$,  Algorithm~\ref{algo:greedy} outputs a tournament with at least $\frac{n}{2}+1$ rounds.
\end{corollary}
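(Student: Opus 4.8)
The plan is to obtain this as an immediate consequence of Propositions~\ref{prop:k=2} and~\ref{prop:cha}. First I would invoke Proposition~\ref{prop:k=2} to guarantee that, for even $n$, Algorithm~\ref{algo:greedy} always produces a tournament with at least $\frac{n}{2}$ rounds. In particular, no matter which $K_2$-factors the greedy algorithm selects, it reaches a tournament $T$ consisting of exactly $\frac{n}{2}$ rounds, with some feasibility graph $G$ and complement $\bar{G}$.

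Next I would apply the characterization in Proposition~\ref{prop:cha} to this tournament $T$. That proposition states that $T$ fails to be extendable precisely when $\bar{G}=K_{\frac{n}{2},\frac{n}{2}}$ and $\frac{n}{2}$ is odd. The key observation is that the extra hypothesis $n\equiv 0 \mod{4}$ forces $\frac{n}{2}$ to be even, so the parity condition in Proposition~\ref{prop:cha} can never be satisfied. Hence that proposition rules out non-extendability outright: regardless of the structure of $\bar{G}$, the tournament $T$ can be extended by one more round. Since the greedy algorithm halts only once no $H$-factor remains in the feasibility graph, it does not stop after $\frac{n}{2}$ rounds and therefore outputs at least $\frac{n}{2}+1$ rounds.

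There is essentially no obstacle here; the corollary is just the even-parity specialization of the characterization. The only point worth checking is that Proposition~\ref{prop:cha} is stated for an arbitrary tournament of $\frac{n}{2}$ rounds, and thus applies verbatim to the tournament produced by the greedy algorithm once it has deleted $\frac{n}{2}$ perfect matchings. As a sanity check one may also note that after these $\frac{n}{2}$ deletions every vertex of $\bar{G}$ has degree exactly $\frac{n}{2}$, matching the $\Delta(\bar{G})=\frac{n}{2}$ setting used in the proof of Proposition~\ref{prop:cha}.
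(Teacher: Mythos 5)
Your proposal is correct and matches the paper's intended argument: the corollary is stated without proof precisely because it follows immediately from Proposition~\ref{prop:k=2} (the greedy algorithm reaches $\frac{n}{2}$ rounds) together with Proposition~\ref{prop:cha} (non-extendability requires $\frac{n}{2}$ odd, which $4 \mid n$ rules out). Your sanity check on $\Delta(\bar{G})=\frac{n}{2}$ is also accurate, so nothing is missing.
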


\begin{remark}
\label{rem:extend}
By selecting the perfect matching in round $\frac{n}{2}$ carefully, there always exists a tournament with $\frac{n}{2}+1$ rounds.
\end{remark}
\begin{proof}
After $\frac{n}{2}-1$ rounds of a tournament $T$, the degree of every vertex in $G$ is at least $\frac{n}{2}$. By Dirac's theorem \citep{dirac1952some}, there is a Hamiltonian cycle in $G$. This implies that two edge-disjoint perfect matchings exist: one that takes every even edge of the Hamiltonian cycle and one that takes every odd edge of the Hamiltonian cycle. If we first extend $T$ by taking every even edge of the Hamiltonian cycle and then extend $T$ by taking every odd edge of the Hamiltonian cycle, we have a tournament of $\frac{n}{2}+1$ rounds.
\end{proof}

\section{The Greedy Social Golfer Problem}\label{sec:gsgp}
We generalize the previous results to $k\geq 3$. This means we analyze tournaments with $n$ participants and $H=K_k$. Dependent on $n$ and $k$, we provide tight bounds on the number of rounds that can be scheduled greedily, i.e., by using Algorithm~\ref{algo:greedy}. 
Remember that we assume that $n$ is divisible by $k$.

\begin{theorem}\label{thm:k>2}
For each $n \in \mathbb{N}$ and $H= K_k$, Algorithm~\ref{algo:greedy} outputs a tournament with at least $\lfloor\frac{n}{k(k-1)}\rfloor$ rounds.
\end{theorem}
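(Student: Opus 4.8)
The plan is to mimic the argument of Proposition~\ref{prop:k=2}, replacing the application of Dirac's theorem (which guarantees a Hamiltonian cycle and hence a perfect matching once the minimum degree is large enough) with a result that guarantees a $K_k$-factor once the minimum degree is large enough. The natural tool here is the Hajnal--Szemer\'edi theorem on equitable colorings, which the excerpt has already set up: recall that a $K_k$-factor in the feasibility graph $G$ corresponds exactly to an equitable $\frac{n}{k}$-coloring of the complement $\bar G$, and that the Hajnal--Szemer\'edi theorem states that every graph with maximum degree at most $\Delta$ admits an equitable $(\Delta+1)$-coloring. So the whole argument is really a degree-counting argument carried out in $\bar G$.

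First I would track how the degrees in $\bar G$ grow as the greedy algorithm runs. Initially, after zero rounds, $G = K_n$ and $\bar G$ has no edges, so $\Delta(\bar G) = 0$. Each $K_k$-factor deleted from $G$ is added to $\bar G$, and in a single $K_k$-factor every vertex lies in exactly one clique of size $k$, so each vertex gains exactly $k-1$ neighbours in $\bar G$ per round. Hence after $r$ rounds the maximum (in fact, every) degree in $\bar G$ equals $r(k-1)$. To guarantee that another round can be scheduled, i.e.\ that $\bar G$ still admits an equitable $\frac{n}{k}$-coloring, it suffices by Hajnal--Szemer\'edi to have $\Delta(\bar G) + 1 \le \frac{n}{k}$, that is $r(k-1) + 1 \le \frac{n}{k}$. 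Solving this inequality for $r$ shows that as long as $r(k-1) < \frac{n}{k}$, equivalently $r < \frac{n}{k(k-1)}$, a further round exists; so the algorithm cannot terminate before it has produced $\lfloor \frac{n}{k(k-1)} \rfloor$ rounds.

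More carefully, I would argue that the algorithm is guaranteed to complete round number $\lfloor \frac{n}{k(k-1)} \rfloor$. Suppose the algorithm has completed $r - 1$ rounds with $r \le \lfloor \frac{n}{k(k-1)} \rfloor$; then $\Delta(\bar G) = (r-1)(k-1) \le \left(\frac{n}{k(k-1)} - 1\right)(k-1) = \frac{n}{k} - (k-1) \le \frac{n}{k} - 1$, so $\Delta(\bar G) + 1 \le \frac{n}{k}$, and Hajnal--Szemer\'edi yields an equitable $\frac{n}{k}$-coloring of $\bar G$, i.e.\ a $K_k$-factor in $G$, so the $r$-th round can be played. Iterating this from $r = 1$ up to $r = \lfloor \frac{n}{k(k-1)} \rfloor$ shows the output tournament has at least $\lfloor \frac{n}{k(k-1)} \rfloor$ rounds.

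The main thing to get right is the arithmetic in the degree bound, and in particular the constant shift: since the Hajnal--Szemer\'edi bound is $\Delta + 1 \le \frac{n}{k}$ rather than $\Delta \le \frac{n}{k}$, one must check that the $+1$ is absorbed by the floor and the factor $k-1$ and does not cost a round. This is exactly why the computation above uses the slack $\frac{n}{k} - (k-1) \le \frac{n}{k} - 1$, valid for $k \ge 2$; the potential pitfall is an off-by-one error here. I do not expect any genuine difficulty beyond this bookkeeping, since the structural content is entirely supplied by the correspondence between $K_k$-factors and equitable colorings together with the Hajnal--Szemer\'edi theorem; the remaining $k = 2$ case is already covered by Proposition~\ref{prop:k=2}.
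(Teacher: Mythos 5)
Your proposal is correct and follows essentially the same route as the paper's proof: both track the degree growth in the complement graph $\bar{G}$ (each round adds $k-1$ to every degree), bound $\Delta(\bar G)$ by $\frac{n}{k}-(k-1)\le\frac{n}{k}-1$ using $\lfloor\frac{n}{k(k-1)}\rfloor\le\frac{n}{k(k-1)}$, and invoke the Hajnal--Szemer\'edi theorem to extract an equitable $\frac{n}{k}$-coloring of $\bar G$, i.e.\ a $K_k$-factor in $G$. The only difference is presentational --- the paper argues by contradiction on the terminating round count, while you argue directly that each round up to $\lfloor\frac{n}{k(k-1)}\rfloor$ can be completed --- and your bookkeeping of the off-by-one constant is exactly the paper's.
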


Before we continue with the proof, we first state a result from graph theory. In our proof, we will use the Hajnal-Szemeredi theorem and adapt it such that it applies to our setting.

\subsubsection*{Hajnal-Szemeredi Theorem \citep{hajnal1970proof}.}
Let $G$ be a graph with $n\in\N$ vertices and maximum vertex degree $\Delta(G)\leq \ell-1$. Then $G$ is equitable $\ell$-colorable.

\begin{proof}[Proof of Theorem \ref{thm:k>2}.]
We start by proving the lower bound on the number of rounds. Assume for sake of contradiction that there are $n \in \mathbb{N}$ and $k \in \mathbb{N}$ such that the greedy algorithm for $H=K_k$ terminates with a tournament $T$ with  $r\leq \lfloor\frac{n}{k(k-1)}\rfloor -1$ rounds. We will use the feasibility graph $G$ corresponding to $T$. Recall that the degree of a vertex in a complete graph with $n$ vertices is $n-1$. For each $K_k$-factor $(H_1, \dots, H_r)$, every vertex loses $k-1$ edges. Thus, every vertex in $G$ has degree
\[n-1 - r(k-1) \geq  n-1-\left(\Big\lfloor\frac{n}{k(k-1)}\Big\rfloor -1\right)(k-1) \geq n-1-\frac{n}{k}+k-1\;.\]
We observe that each vertex in the complement graph $\bar{G}$ has at most degree $\frac{n}{k} - k +1$. Using the Hajnal-Szemeredi theorem  with $\ell = \frac{n}{k}$, we obtain the existence of an $\frac{n}{k}$-coloring where all color classes have size $k$. Since there are no edges between vertices of the same color class in $\bar{G}$, they form a clique in $G$. Thus, there exists a $K_k$-factor in $G$, which is a contradiction to the assumption that Algorithm \ref{algo:greedy} terminated.
 This implies that $r>\lfloor\frac{n}{k(k-1)}\rfloor -1$, i.e., the total number of rounds is at least $\lfloor\frac{n}{k(k-1)}\rfloor$.
\end{proof}

\begin{remark}
\citet{kierstead2010fast} showed that finding a clique-factor can be done in polynomial time if the minimum vertex degree is at least $\frac{n(k-1)}{k}$.
\end{remark}

Let OPT be the maximum possible number of rounds of a tournament. We conclude that the greedy algorithm is a constant factor approximation factor for the social golfer problem.

\begin{corollary}
Algorithm~\ref{algo:greedy} outputs at least $\frac{1}{k}\text{OPT}-1$ rounds for the social golfer problem. Thus it is a $\frac{k-1}{2k^2-3k-1}$-approximation algorithm for the social golfer problem.
\end{corollary}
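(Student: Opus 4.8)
The plan is to pair the lower bound on the number of greedy rounds from Theorem~\ref{thm:k>2} with a matching edge-counting upper bound on $\text{OPT}$, and then to convert the resulting additive guarantee into a multiplicative approximation ratio. For the upper bound, I would count edges: rounds are edge-disjoint, a single $K_k$-factor consists of $n/k$ cliques each using $\binom{k}{2}$ edges, hence it occupies $\frac{n}{k}\binom{k}{2}=\frac{n(k-1)}{2}$ edges, while $K_n$ has $\binom{n}{2}=\frac{n(n-1)}{2}$ edges in total. Therefore no tournament can have more than $\frac{n(n-1)/2}{n(k-1)/2}=\frac{n-1}{k-1}$ rounds, i.e. $\text{OPT}\leq \frac{n-1}{k-1}$.

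For the additive claim, Theorem~\ref{thm:k>2} guarantees at least $\lfloor \frac{n}{k(k-1)}\rfloor$ rounds, so using $\lfloor x\rfloor> x-1$ together with the bound on $\text{OPT}$,
\[
\Big\lfloor \tfrac{n}{k(k-1)}\Big\rfloor > \frac{n}{k(k-1)} - 1 \geq \frac{n-1}{k(k-1)} - 1 = \frac{1}{k}\cdot\frac{n-1}{k-1} - 1 \geq \frac{1}{k}\text{OPT} - 1,
\]
which proves that the greedy algorithm outputs at least $\frac{1}{k}\text{OPT}-1$ rounds.

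The more delicate step is turning this into a multiplicative guarantee, since the additive $-1$ is not negligible when $\text{OPT}$ is small. Here I would use that the greedy algorithm always schedules at least one round (a single $K_k$-factor exists because $k\mid n$), so $\text{ALG}\geq\max\{1,\lfloor \frac{n}{k(k-1)}\rfloor\}$, and then analyze the worst case over the ``plateaus'' on which $\lfloor \frac{n}{k(k-1)}\rfloor$ is a fixed value $g$. On such a plateau $\text{OPT}$ is largest at the biggest admissible $n$, namely $n=(g+1)k(k-1)-k$, where $\text{OPT}\leq\frac{n-1}{k-1}=(g+1)k-\frac{k+1}{k-1}$. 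Dividing $\max\{1,g\}$ by this quantity, I expect the minimum ratio to occur on the first plateau $g=1$, i.e. at $n=k(2k-3)$, where $\text{ALG}\geq 1$ and $\text{OPT}\leq\frac{2k^2-3k-1}{k-1}$, giving $\frac{\text{ALG}}{\text{OPT}}\geq\frac{k-1}{2k^2-3k-1}$. For $n<k(k-1)$ the floor is $0$ but $\text{ALG}\geq 1$ still yields a ratio of at least $\frac{1}{k-1}$, and one checks $\frac{1}{k-1}\geq\frac{k-1}{2k^2-3k-1}$ (equivalently $(k-2)(k+1)\geq 0$), so these small cases are not binding.

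The main obstacle I anticipate is precisely this last conversion: naively combining $\text{ALG}\geq\frac{1}{k}\text{OPT}-1$ with $\text{ALG}\geq 1$ leaves a gap for intermediate values of $\text{OPT}$, so the argument must instead exploit the exact plateau structure of the floor. Concretely, I would verify that the ratio $\frac{g}{(g+1)k-(k+1)/(k-1)}$ is minimized at $g=1$ by showing that $\frac{g}{\alpha}-\big((g+1)k-\frac{k+1}{k-1}\big)$ is nonnegative and increasing in $g$, vanishing exactly at $g=1$ for $\alpha=\frac{k-1}{2k^2-3k-1}$. This monotonicity computation is the computational heart of the proof and confirms that $\frac{k-1}{2k^2-3k-1}$ is the tight outcome of the analysis rather than an artifact of the estimates.
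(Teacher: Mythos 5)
Your proposal is correct and takes essentially the same route as the paper: the same upper bound $\mathrm{OPT}\le\frac{n-1}{k-1}$, the same two lower bounds on the greedy output ($\ge 1$ always, and $\ge\lfloor n/(k(k-1))\rfloor$ from Theorem~\ref{thm:k>2}), and the same binding worst case at $n=k(2k-3)$, where the greedy guarantee is one round but $\mathrm{OPT}$ can be as large as $\frac{2k^2-3k-1}{k-1}$. The only organizational difference is that you minimize the ratio plateau-by-plateau over $g=\lfloor n/(k(k-1))\rfloor$ (checking $(g-1)\bigl(k-\frac{k+1}{k-1}\bigr)\ge 0$), whereas the paper splits into $n\le 2k(k-1)-k$, handled by $\mathrm{ALG}\ge 1$, and $n\ge 2k(k-1)-k$, handled by replacing the floor with a linear bound and using monotonicity in $n$; both computations produce the same constant $\frac{k-1}{2k^2-3k-1}$.
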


\begin{proof}
The first statement follows directly from Theorem \ref{thm:k>2} and the fact that $\text{OPT} \leq \frac{n-1}{k-1}$.

For proving the second statement, we first consider the case $n\leq 2k(k-1)-k$. Note that the algorithm always outputs at least one round. OPT is upper bounded by $\frac{n-1}{k-1} \leq \frac{2k (k-1)- k-1}{k-1}=\frac{2k^2-3k-1}{k-1}$, which implies the approximation factor. 
For $n\geq 2k(k-1)-k$, observe that the greedy algorithm guarantees to output $\lfloor \frac{n}{k(k-1)} \rfloor$ rounds in polynomial time by Theorem \ref{thm:k>2}. This yields
\begin{align*}
    \frac{\left \lfloor \frac{n}{k(k-1)}\right\rfloor}{\frac{n-1}{k-1}} &\geq   \frac{\frac{n - \left(k (k-1)-k\right)}{k(k-1)}}{\frac{n-1}{k-1}}\geq \frac{\frac{2k(k-1)-k - k (k-1)+k}{k(k-1)}}{\frac{2k(k-1)-k-1}{k-1}}=\frac{k-1}{2k^2-3k-1},
\end{align*}
where the first inequality follows since we round down by at most $\frac{k(k-1)-k}{k(k-1)}$ and the second inequality follows since the second term is increasing in $n$ for $k \geq 3$.
\end{proof}

Our second main result on greedy tournament scheduling with $H=K_k$ shows that the bound of Theorem \ref{thm:k>2} is tight.

\begin{theorem}
There are infinitely many $n \in \N$ for which there exists a tournament that cannot be extended after $\lfloor\frac{n}{k(k-1)}\rfloor$ rounds.
\label{lowerboundexample_k>2}
\end{theorem}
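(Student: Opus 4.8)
The plan is to realize the extremal configuration as a \emph{space barrier}: a feasibility graph $G$ containing an independent set that is one vertex too large to be covered by $n/k$ cliques. First I would record what a stuck tournament forces. If a tournament of $r=\lfloor n/(k(k-1))\rfloor$ rounds cannot be extended, then every vertex has lost exactly $r(k-1)$ edges, so $\bar G$ is $r(k-1)$-regular and $G$ is $(n-1-r(k-1))$-regular. Comparing the minimum degree $n-1-r(k-1)$ with the Hajnal--Szemer\'edi threshold $(k-1)n/k$ shows that an obstruction can only survive when $r(k-1)=n/k$, i.e.\ when $k(k-1)\mid n$; there $G$ is exactly $((k-1)n/k-1)$-regular and $\bar G$ is $(n/k)$-regular. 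I therefore restrict to $n=k(k-1)q$ and aim to build, for infinitely many $q$, such a regular $\bar G$ together with an explicit factorization into $q$ rounds whose complement has no $K_k$-factor.

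The obstruction I would use is an independent set $I\subseteq V(G)$ with $|I|=n/k+1$. Any $K_k$-factor of $G$ consists of $n/k$ cliques, each meeting $I$ in at most one vertex and hence covering at most $n/k<|I|$ of its vertices, so no $K_k$-factor exists and the tournament is stuck. The point is that this barrier is compatible with the forced regularity: $I$ is independent in $G$ exactly when $I$ is a clique in $\bar G$, and a clique of size $n/k+1$ in an $(n/k)$-regular graph uses up the entire degree of each of its vertices. Thus the $I$-vertices have all their played edges inside $I$, which keeps $\bar G$ exactly $(n/k)$-regular while concentrating the whole defect on one set.

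It then remains to exhibit the played graph as a genuine $q$-round tournament. Since the $I$-vertices have no played edges leaving $I$, every round must restrict to a $K_k$-factor of $I$ and, separately, a $K_k$-factor of $W:=V\setminus I$. On $I$ the clique $K_{n/k+1}$ has to be split into $q$ edge-disjoint $K_k$-factors that together use every internal pair, which is precisely a resolvable balanced incomplete block design with block size $k$ on $n/k+1$ points; it has $(n/k)/(k-1)=q$ parallel classes. Writing $n/k+1=(k-1)q+1$, the required divisibility conditions $n/k+1\equiv 0 \mod{k}$ and $(k-1)\mid n/k$ hold as soon as $q\equiv 1\mod{k}$, and by the asymptotic existence of resolvable designs (the Ray-Chaudhuri--Wilson results cited earlier) such a design exists for infinitely many such $q$. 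On $W$, whose order $(k-1)^2q-1$ is divisible by $k$ in the same range of $q$, I would place $q$ edge-disjoint $K_k$-factors; combining the $i$-th class on $I$ with the $i$-th factor on $W$ gives the $i$-th round, and all $q$ rounds are spanning and edge-disjoint.

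The hard part is the factorization of $W$. Its order $(k-1)^2q-1$ is, for $k\ge 4$, never the order of a resolvable design with blocks of size $k$ --- one checks $|W|-1\equiv-2\mod{(k-1)}$ --- so I cannot simply re-use the construction on $I$. Only $q$ edge-disjoint $K_k$-factors are needed, far below the degree ceiling $\lfloor(|W|-1)/(k-1)\rfloor\approx(k-1)q$; yet iterating Hajnal--Szemer\'edi extracts only about $(k-1)q/k<q$ of them, so the final $\approx q/k$ factors call for a genuine packing statement for clique-factors in near-complete graphs rather than a greedy argument. For $k=3$ this difficulty vanishes, since $(k-1)\mid 2$ makes $|W|$ a resolvable-design order and one may take $q$ of its parallel classes outright; I would therefore settle $k=3$ cleanly first and handle general $k$ by supplying the missing factors through such a clique-factor packing result.
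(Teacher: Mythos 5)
Your construction is, in its skeleton, the paper's own: the paper also takes $n=k(k-1)q$, isolates a set $A$ of size $\frac{n}{k}+1$ all of whose internal edges are used up by the $q$ rounds (your clique $I$ in $\bar G$, equivalently your independent set in $G$), uses the identical counting argument that the $\frac{n}{k}$ cliques of a putative next factor can cover at most $\frac{n}{k}<|I|$ vertices of $I$, and realizes the $q$ parallel classes inside the barrier by the identical appeal to the asymptotic existence of resolvable $(\ell,k,1)$-designs of Ray-Chaudhuri and Wilson under the same congruence (your $q\equiv 1\mod{k}$ is the paper's $j=\lambda k+1$). Your preliminary observation that a stuck tournament after $\lfloor n/(k(k-1))\rfloor$ rounds forces $k(k-1)\mid n$ is correct and consistent with the paper's choice of $n$, and your resolution of $k=3$ via a resolvable design on $W$ is sound.

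The genuine gap is exactly where you stop: producing $q$ edge-disjoint $K_k$-factors on $W$ for $k\geq 4$. Your diagnosis is accurate on both counts --- $|W|=(k-1)^2q-1$ is not a resolvable-design order for $k\geq 4$, and iterated Hajnal--Szemer\'edi stalls after roughly $\frac{(k-1)q}{k}<q$ factors --- but the ``genuine packing statement for clique-factors in near-complete graphs'' you then invoke is neither cited nor proved, so as written your argument establishes the theorem only for $k=3$. The paper closes precisely this hole with an elementary rotational scheme, no packing theorem needed: write the players of $W$ into a $k\times \frac{|W|}{k}$ table, declare each column a $K_k$, and pass from one round to the next by shifting row $i$ by $i-1$ positions modulo $\frac{|W|}{k}$. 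Two players in distinct rows then drift apart at a relative speed $d\in\{1,\dots,k-1\}$ per round; one checks $\frac{|W|}{k}\equiv -1 \mod{(k-1)}$ (this is where the choice $\ell=\frac{n}{k}+1$ with $q\equiv 1\mod{k}$ enters the arithmetic), so the speed-$(k-1)$ pairs can recur only after at least $\frac{2|W|}{k(k-1)}$ rounds, pairs with $d\leq k-2$ only after at least $\frac{|W|}{k(k-2)}$ rounds, and both bounds exceed $q$ once $n$ is large; players in the same row never share a column at all, which is harmless since only edge-disjointness, not coverage, is required on $W$. Supplying such a shifting construction (or an actual packing theorem) is the missing ingredient that would complete your proof for all $k\geq 4$.
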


\begin{proof}
We construct a tournament with $n=j(k(k-1))$ participants for some $j$ large enough to be chosen later. We will define necessary properties of $j$ throughout the proof and argue in the end that there are infinitely many possible integral choices for $j$. The tournament we will construct has $\lfloor\frac{n}{k(k-1)}\rfloor$ rounds and we will show that it cannot be extended. Note that $\lfloor\frac{n}{k(k-1)}\rfloor = \frac{n}{k(k-1)}$.

The proof is based on a step-by-step modification of the feasibility graph $G$. We will start with the complete graph $K_n$ and describe how to delete $\frac{n}{k(k-1)}$ $K_k$-factors such that the resulting graph does not contain a $K_k$-factor. This is equivalent to constructing a tournament with $\lfloor\frac{n}{k(k-1)}\rfloor$ rounds that cannot be extended.

Given a complete graph with $n$ vertices, we partition the vertices $V$ in two sets, a set $A$ with $\ell=\frac{n}{k}+1$ vertices and a set $V \backslash A$ with $n-\ell$ vertices. We will choose all $\frac{n}{k(k-1)}$ $K_k$-factors in such a way, that no edge $\{a,b\}$ with $a\in A$ and $b\notin A$ is deleted, i.e., each $K_k$ is either entirely in $A$ or entirely in $V\setminus A$. We will explain below that this is possible. Since a vertex in $A$ has $\frac{n}{k}$ neighbours in $A$ and $k-1$ of them are deleted in every $K_k$-factor, all edges within $A$ are deleted after deleting $\frac{n}{k(k-1)}$ $K_k$-factors.

We now first argue that after deleting these $\frac{n}{k(k-1)}$ $K_k$-factors, no $K_k$-factor exists. Assume that there exists another $K_k$-factor. In this case, each vertex in $A$ forms a clique with $k-1$ vertices of $V \backslash A$. However, since $(k-1)\cdot(\frac{n}{k}+1)>\frac{(k-1)n}{k}-1=|V \backslash A|$ there are not enough vertices in $V \backslash A$, a contradiction to the existence of the $K_k$-factor.

It remains to show that there are $\frac{n}{k(k-1)}$ $K_k$-factors that do not contain an edge $\{a,b\}$ with $a\in A$ and $b \notin A$. We start by showing that $\frac{n}{k(k-1)}$ $K_k$-factors can be found within $A$. \citet{ray1973existence} showed that given $k'\geq 2$ there exists a constant $c(k')$ such that if $n'\geq c(k')$ and $n' \equiv k' \mod k'(k'-1)$, then a resolvable-BIBD with parameters $(n',k',1)$ exists.

By choosing $k'=k$ and $n' = \ell$ with $j= \lambda \cdot k +1$ for some $\lambda \in \N$ large enough, we establish $\ell\geq c(k)$, where $c(k)$ is defined by \citet{ray1973existence}, and we get
\[|A| = \ell = \frac{n}{k}+1 = j(k-1)+1 = (\lambda k +1)(k-1) + 1 = k+ \lambda k (k-1)\;.\]
Thus, a resolvable-BIBD with parameters $(\ell,k,1)$ exists, and there is a complete tournament for $\ell$ players with $H=K_k$, i.e., we can find $\frac{n}{k(k-1)}$ $K_k$-factors in $A$. 

It remains to show that we also find $\frac{n}{k(k-1)}$ $K_k$-factors in $V \setminus A$. We define a tournament that we call shifting tournament as follows. We arbitrarily write the names of the players in $V \setminus A$ into a table of size $k\times (n-\ell)/k$. Each column of the table corresponds to a $K_k$ and the table to a $K_k$-factor in $V \setminus A$. By rearranging the players we get a sequence of tables, each corresponding to a $K_k$-factor. To construct the next table from a preceding one, for each row $i$, all players move $i-1$ steps to the right (modulo $(n-\ell)/k$).

We claim that this procedure results in $\frac{n}{k(k-1)}$ $K_k$-factors that do not share an edge. First, notice that the step difference between any two players in two rows $i \neq i'$ is at most $k-1$, where we have equality for rows $1$ and $k$. However, we observe that $(n-\ell)/k$ is not divisible by $(k-1)$ since $n/k$ is divisible by $k-1$ by definition, whereas $\ell/k$ is not divisible by $k-1$ since $\ell/k(k-1)=1/(k-1)+\lambda$ and this expression is not integral. Thus, a player in row $1$ can only meet a player in row $k$ again after at least $2\frac{n-\ell}{k(k-1)}$ rounds.

Since $2\frac{n-\ell}{k(k-1)}\geq\frac{n}{k(k-1)}$ if $n\geq \frac{2k}{k-2}$, the condition is satisfied for $n$ sufficiently large.

Similarly, we have to check that two players in two rows with a relative distance of at most $k-2$ do not meet more than once. Since $\frac{n-\ell}{k(k-2)}\geq\frac{n}{k(k-1)}$ if $n\geq k^2-k$, the condition is also satisfied for $n$ sufficiently large.

Observe that there are infinitely many $n$ and $\ell$ such that $\ell=\frac{n}{k}+1$, $n$ is divisible by $k(k-1)$ and $\ell \equiv k \mod{k(k-1)}$ and thus the result follows for sufficiently large $n$.
\end{proof}

We turn our attention to the problem of characterizing tournaments that are not extendable after $\lfloor\frac{n}{k(k-1)}\rfloor$ rounds. Assuming the Equitable $\Delta$-Coloring Conjecture (E$\Delta$CC) is true, we give an exact characterization of the feasibility graphs of tournaments that cannot be extended after $\lfloor\frac{n}{k(k-1)}\rfloor$ rounds. The existence of an instance not fulfilling these conditions would immediately disprove the E$\Delta$CC.
Furthermore, this characterization allows us to guarantee  $\lfloor\frac{n}{k(k-1)}\rfloor+1$ rounds in every tournament when the last two rounds are chosen carefully. 

\subsubsection*{Equitable $\Delta$-Coloring Conjecture \citep{chen1994equitable}.}
Let $G$ be a connected graph with maximum degree $\Delta(G) \leq \ell$. Then $G$ is not equitable $\ell$-colorable if and only if one of the following three cases occurs:
\begin{enumerate}
\item[(i)] $G=K_{\ell+1}$.
\item[(ii)] $\ell=2$ and $G$ is an odd cycle.
\item[(iii)] $\ell$ odd and $G=K_{\ell,\ell}$.
\end{enumerate}
\subsubsection*{}The conjecture was first stated by \citet{chen1994equitable} and is proven for $|V|=k\cdot\ell$ and $k=2,3,4$. See the Chen-Lih-Wu theorem for $k=2$ and \citet{kierstead2015refinement} for $k=3,4$. Both results make use of Brooks' theorem \citep{brooks1941coloring}. For $k>4$, the conjecture is still open.

\begin{proposition}
If \EDCC is true, a tournament with $\lfloor\frac{n}{k(k-1)} \rfloor$ rounds cannot be extended if and only if $K_{\frac{n}{k}+1}$ is a subgraph of the complement graph $\bar{G}$.
\label{prop:charOneRoundMore}
\end{proposition}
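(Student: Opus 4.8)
The plan is to mirror the structure of the proof of Proposition~\ref{prop:cha}, which handled the $k=2$ case, and lift it to general $k$ using the Equitable $\Delta$-Coloring Conjecture in place of the Chen-Lih-Wu theorem. As in that proof, the key translation is that a tournament with $\lfloor n/(k(k-1))\rfloor$ rounds can be extended precisely when its feasibility graph $G$ admits a $K_k$-factor, which in turn is equivalent to the complement $\bar{G}$ admitting an equitable $\frac{n}{k}$-coloring (each color class being a clique of size $k$ in $G$). So the whole statement reduces to: $\bar{G}$ fails to be equitable $\frac{n}{k}$-colorable if and only if $K_{\frac{n}{k}+1}$ is a subgraph of $\bar{G}$. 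The first thing I would pin down is the degree bound. After exactly $\lfloor n/(k(k-1))\rfloor$ rounds, every vertex has lost $(k-1)\lfloor n/(k(k-1))\rfloor$ edges in $\bar{G}$'s complement, so the maximum degree of $\bar{G}$ is at most roughly $\frac{n}{k}$; I would carry out the arithmetic carefully to confirm $\Delta(\bar{G}) \leq \frac{n}{k}$, setting $\ell = \frac{n}{k}$ so that the hypothesis $\Delta(\bar{G}) \leq \ell$ of the conjecture is met.

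For the easy direction, if $K_{\frac{n}{k}+1} = K_{\ell+1}$ is a subgraph of $\bar{G}$, then those $\ell+1$ vertices pairwise need distinct colors in any proper coloring, so no $\frac{n}{k}$-coloring (let alone an equitable one) can exist; hence the tournament cannot be extended. This handles the ``if'' part essentially by the pigeonhole principle and does not even need the conjecture.

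For the converse I would invoke \EDCC. The subtlety is that the conjecture is stated for \emph{connected} graphs, so first I would argue that one may reduce to a connected component: if $\bar{G}$ is not equitably $\frac{n}{k}$-colorable, then because an equitable coloring of the whole graph would follow from suitably combining colorings of the components, some component must itself obstruct equitable colorability. I then apply \EDCC to that component, which is connected with maximum degree at most $\ell$, and conclude it falls into one of the three exceptional cases (i)--(iii). Cases (ii) and (iii), the odd cycle with $\ell=2$ and the balanced complete bipartite graph $K_{\ell,\ell}$, need to be ruled out or shown to reduce to case (i): for $k\geq 3$ we have $\ell = \frac{n}{k} \geq 3 > 2$, killing case (ii), and I would check that a $K_{\ell,\ell}$ component is incompatible with the global constraint that $\bar{G}$ on $n$ vertices must split into cliques of size exactly $k$ — intuitively $K_{\ell,\ell}$ forces color classes that cannot all have size $k$. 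That leaves case (i), $G = K_{\ell+1} = K_{\frac{n}{k}+1}$, giving the desired subgraph.

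The main obstacle I anticipate is precisely the handling of the connectivity reduction together with cases (ii) and (iii): the conjecture speaks about a single connected graph, whereas $\bar{G}$ need not be connected, and one must argue that the only genuine obstruction surviving in the $k\geq 3$ regime is a $K_{\frac{n}{k}+1}$ clique sitting inside some component. Making the reduction to components rigorous — showing that equitable colorings of the components can be stitched into an equitable coloring of $\bar{G}$ when no component is of type (i) — is the delicate combinatorial step, since equitability is a global balance condition and recombining component colorings while preserving near-equal class sizes requires care about how the leftover vertices distribute across the $\frac{n}{k}$ color classes.
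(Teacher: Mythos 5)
Your overall architecture matches the paper's proof: the translation of extendability into equitable $\frac{n}{k}$-colorability of $\bar{G}$, the easy direction by pigeonhole (which indeed needs no conjecture), the reduction to a connected component (which the paper isolates as Claim~\ref{claim:connectedcomponents} and proves via a McNaughton-style wrap-around recoloring), and then an application of \EDCC with $\ell=\frac{n}{k}$ followed by a case analysis. The genuine gap is in how you dispose of the two exceptional cases. For case (ii) you assert $\ell=\frac{n}{k}\geq 3$, but this is false in general: $n=2k$ is allowed, and for $k=3$, $n=6$ the case is not vacuous --- after one round of triangles, $\bar{G}$ consists of two disjoint triangles, each an odd cycle, and $\ell=2$. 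Your argument would declare this configuration impossible, yet it is precisely a non-extendable tournament that the characterization must cover. The paper instead observes that an odd-cycle component can only arise when $k=3$ and the component is a single triangle, so case (ii) collapses into case (i), since $K_3=K_{\frac{n}{k}+1}$ when $\frac{n}{k}=2$.

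For case (iii), your proposed obstruction (``$K_{\ell,\ell}$ forces color classes that cannot all have size $k$'') is not the right one, and I do not see how to complete it: the issue is not about sizes of color classes at all. The structural fact you are missing --- which handles case (iii) and also repairs case (ii) --- is that $\bar{G}$ is an edge-disjoint union of $k$-cliques (one $K_k$ per vertex per round), and each such clique lies entirely inside one connected component. Since $k\geq 3$, every component of $\bar{G}$ containing at least one edge contains a triangle, hence an odd cycle; but $K_{\ell,\ell}$ is bipartite, so it cannot occur as a component. With this single observation about the clique structure of $\bar{G}$ inserted, both exceptional cases are eliminated (or absorbed into case (i)) and the rest of your outline goes through exactly as in the paper.
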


Before we start the proof we state the following claim, which we will need in the proof.

\begin{claim}
\label{claim:connectedcomponents}
Let $G$ be a graph with $|G|$ vertices and let $m$ be such that $|G| \equiv 0 \mod{m}$. Given an equitable $m$-coloring for every connected component $G_i$ of $G$, there is an equitable $m$-coloring for $G$.
\end{claim}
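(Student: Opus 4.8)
The plan is to exploit the fact that color labels can be permuted independently in each connected component, and to reduce the claim to a balanced assignment problem that I settle by a cyclic (round-robin) construction. First I would fix notation: let $G_1,\dots,G_p$ be the connected components and write $|G_i| = q_i m + r_i$ with $0 \le r_i < m$. Then an equitable $m$-coloring of $G_i$ consists of exactly $r_i$ color classes of size $q_i+1$ (the \emph{big} classes) and $m-r_i$ classes of size $q_i$ (the \emph{small} classes). Since $m \mid |G| = \sum_i |G_i| = m\sum_i q_i + \sum_i r_i$, the remainders satisfy $m \mid \sum_i r_i$, so I may write $\sum_i r_i = sm$ for an integer $s \ge 0$. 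Because there are no edges between distinct components, I am free to relabel the $m$ colors inside each $G_i$ by an arbitrary permutation without destroying properness; the combined coloring of $G$ is proper regardless of how I match labels across components. Hence the only thing left to control is the class \emph{sizes}: color $c$ ends up with total size $\sum_i q_i + (\#\text{components in which } c \text{ is big})$, and since $m \mid |G|$ an equitable coloring of $G$ forces every class to have size exactly $|G|/m = \sum_i q_i + s$. So it suffices to designate, in each component $i$, which $r_i$ of the $m$ colors are big in such a way that every color is big in exactly $s$ components.

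The key step is to produce such a designation. I would do this cyclically: set $t_1 = 0$, and for $i = 1,\dots,p$ declare the colors $t_i, t_i+1, \dots, t_i+r_i-1$ (indices reduced mod $m$) to be the big colors of component $i$, then advance $t_{i+1} = t_i + r_i$. Since $r_i < m$, the $r_i$ chosen colors of a fixed component are pairwise distinct, so this is a legitimate choice of big classes for that component. Viewing the choices as arcs of length $r_i$ laid consecutively along the half-open interval $[0,sm)$ and then reduced mod $m$, the arcs partition $[0,sm)$, and each residue class mod $m$ has exactly $s$ representatives in $[0,sm)$; therefore every color is big in precisely $s$ components, as needed. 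Relabeling each component's equitable coloring according to this designation and taking the union then yields an equitable $m$-coloring of $G$.

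The main obstacle is conceptual rather than computational: a naive concatenation of the component colorings can be arbitrarily unbalanced (for instance, all big classes could accidentally fall on the same color), so the crux is to recognize that the color labels may be permuted independently per component and that the residual size-balancing reduces to distributing the $\sum_i r_i = sm$ big-slots equally among the $m$ colors. Once the problem is phrased this way, the cyclic assignment resolves it immediately; the only points requiring care are verifying that $r_i < m$ guarantees distinct colors within a single component and that $[0,sm)$ contains exactly $s$ integers in each residue class mod $m$.
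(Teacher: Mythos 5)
Your proof is correct and takes essentially the same approach as the paper's: both exploit that color labels may be permuted independently in each component and then distribute the large (big) color classes cyclically across the $m$ colors — the paper explicitly invokes McNaughton's wrap-around rule, which is exactly your round-robin assignment — so that every color is big in exactly $s = \sum_i r_i/m$ components. Your verification via residues in the interval $[0,sm)$ is a slightly more explicit justification of the balancing step, but the construction is the same.
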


\begin{proof}
Let $G$ consist of connected components $G_1, \dots, G_c$. In every connected component $G_i$, $i \in \{1, \dots, c\}$, there are $\ell_i \equiv |G_i| \mod{m}$ \emph{large color classes}, i.e., color classes with $\lfloor\frac{|G_i|}{m}\rfloor +1$ vertices and $m-\ell_i$ \emph{small color classes}, i.e., color classes with $\lfloor\frac{|G_i|}{m}\rfloor$ vertices. First note that from $\ell_i \equiv|G_i| \mod{m}$, it follows that $(\sum \ell_i) \equiv (\sum |G_i|) \equiv |G| \equiv 0 \mod{m}$, i.e., $\sum \ell_i$ is divisible by $m$.

In the remainder of the proof, we will argue that we can recolor the color classes in the connected components to new color classes $\{1,\ldots, m\}$ that form an equitable coloring. The proof is inspired by McNaughton's wrap around rule \cite{mcnaughton1959scheduling} from Scheduling. Pick some connected component $G_i$ and assign the $\ell_i$ large color classes to the new colors $\{1, \dots, \ell_i\}$. Choose some next connected component $G_j$ with $j\neq i$ and assign the $\ell_j$ large color classes to the new colors $\{(\ell_{i} + 1) \mod m, \dots, (\ell_{i} + \ell_j) \mod{m} \}$. Proceed analogously with the remaining connected components. Note that $\ell_i<m$ for all $i \in \{1, \dots, c\}$, thus we assign at most one large color class from each component to every new color. Finally, for each connected component we add a small color class to all new colors if no large color class from this connected component was added in the described procedure.

Each new color class contains exactly $\frac{\sum \ell_i}{m}$ large color classes and $c - \frac{\sum \ell_i}{m}$ small color classes and has thus the same number of vertices.
This gives us an $m$-equitable coloring of $G$.
\end{proof}

\begin{proof}[Proof of Proposition~\ref{prop:charOneRoundMore}.]
First, assume that $K_{\frac{n}{k}+1}$ is a subgraph of $\bar{G}$, the complement of the feasibility graph. We will show that the tournament is not extendable. To construct an additional round, on the one hand at most one vertex from the complete graph $K_{\frac{n}{k}+1}$ can be in each clique. On the other hand, there are only $\frac{n}{k}$ cliques in a round which directly implies that the tournament is not extendable.

Second, assume that the tournament cannot be extended after $\lfloor \frac{n}{k(k-1)} \rfloor$ rounds. We will show that given \EDCC $K_{\frac{n}{k}+1}$ is a subgraph of $\bar{G}$. If there is an equitable $\frac{n}{k}$-coloring for every connected component, then by \Cref{claim:connectedcomponents} there is an equitable $\frac{n}{k}$-coloring of $\bar{G}$ and thus also a next round of the tournament. This is a contradiction to the assumption. Thus, there is a connected component $\bar{G}_i$ that is not equitable $\frac{n}{k}$-colorable.

After $\lfloor \frac{n}{k(k-1)} \rfloor$ rounds, the degree of all vertices $v$ in $\bar{G}_i$ is $\Delta(\bar{G})=(k-1)\lfloor\frac{n}{k(k-1)}\rfloor \leq \frac{n}{k}$. By E$\Delta$CC for $\ell =\frac{n}{k}$, one of the following three cases occur: (i) $\bar{G}_i=K_{\frac{n}{k} + 1}$, or (ii) $\frac{n}{k}=2$ and $\bar{G}_i$ is an odd cycle, or (iii) $\frac{n}{k}$ is odd and $\bar{G}_i=K_{\frac{n}{k}, \frac{n}{k}}$. We will show that (ii) and (iii) cannot occur, thus $\bar{G}_i=K_{\frac{n}{k} + 1}$, which will finish the proof.

Assume that (ii) occurs, i.e., $\frac{n}{k}=2$ and $\bar{G}_i$ is an odd cycle. Since we assume $k\geq 3$ in this section, an odd cycle can only be formed from a union of complete graphs $K_k$ if there is only one round with $k=3$. Thus, we have that $n=6$. In this case, (ii) reduces to (i) because $\bar{G}_i = K_{3} = K_{\frac{n}{k}+1}$.

Next, assume that (iii) occurs, i.e., $\frac{n}{k}$ is odd and $\bar{G}_i=K_{\frac{n}{k}, \frac{n}{k}}$. Given that $k \geq 3$, we will derive a contradiction. Since $k\geq 3$, every clique of size $k$ contains an odd cycle. This implies $\bar{G}_i$ contains an odd cycle, contradicting that $\bar{G}_i$ is bipartite.
\end{proof}

Note that any tournament with $H=K_k$ and $\lfloor \frac{n}{k(k-1)}\rfloor$ rounds which does not satisfy the condition in \Cref{prop:charOneRoundMore} would disprove the \EDCC.

\begin{proposition}
Let $n>k(k-1)$. If \EDCC is true, then by choosing round $\lfloor \frac{n}{k(k-1)}\rfloor$ carefully, there always exists a tournament with $\lfloor \frac{n}{k(k-1)}\rfloor + 1$ rounds, 
\end{proposition}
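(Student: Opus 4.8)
The plan is to lean on the characterization in Proposition~\ref{prop:charOneRoundMore}. Writing $R := \lfloor n/(k(k-1))\rfloor$, it says that (assuming \EDCC) a tournament of $R$ rounds fails to extend exactly when its complement graph $\bar{G}_R$ contains $K_{n/k+1}$. So, starting from any tournament with $R-1$ rounds, it suffices to choose a round $R$ — one exists since any tournament of fewer than $R$ rounds is extendable by the proof of Theorem~\ref{thm:k>2} — for which $\bar{G}_R$ avoids this subgraph. Since every round is a $K_k$-factor, each vertex gains exactly $k-1$ edges of the complement per round, so $\bar{G}_R$ is $(k-1)R$-regular. First I would dispose of the case $k(k-1)\nmid n$: there $(k-1)R<n/k$, so $\bar{G}_R$ has maximum degree below $n/k$ and cannot contain the $(n/k)$-regular graph $K_{n/k+1}$, and any round $R$ already works.

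This leaves $k(k-1)\mid n$, where $\bar{G}_R$ is exactly $(n/k)$-regular, so any copy of $K_{n/k+1}$ is forced to be an entire connected component (its vertices already spend their full degree inside the clique); call such a component a \emph{bad clique}. Among all admissible choices of round $R$ I would take one for which $\bar{G}_R$ has the fewest bad cliques, and argue this minimum is $0$. Suppose a bad clique $S$ remains. Because $S$ is isolated and the chosen factor $F$ lies in $\bar{G}_R$, no clique of $F$ crosses the cut, so $F$ restricts to $K_k$-factors of both $S$ and $V\setminus S$ (in particular $k\mid (n/k+1)$). A short count settles that a usable partner exists: at most $k-1$ disjoint bad cliques occupy at most $(k-1)(n/k+1)<n$ vertices — the strict inequality being precisely the hypothesis $n>k(k-1)$ — so some clique $Q'$ of $F$ sits in a component that is not a bad clique.

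The core is then an exchange step. Taking a clique $Q\subseteq S$ of $F$ and vertices $u\in Q$, $v\in Q'$, I would replace $Q$ and $Q'$ by the straddling cliques $(Q\setminus\{u\})\cup\{v\}$ and $(Q'\setminus\{v\})\cup\{u\}$. These are legal cliques of the feasibility graph, since all $S$-to-$(V\setminus S)$ edges survive there ($S$ is isolated in $\bar{G}_R\supseteq\bar{G}_{R-1}$), so the modified factor $F'$ is a valid round keeping $\bar{G}_R$ $(n/k)$-regular, but now joining $S$ to the non-bad component through both $u$ and $v$. Using $n>k(k-1)$, which forces $R\geq 2$, one checks that $u$ and $v$ each keep neighbours on both sides, so the two components merge into one too large to be a bad clique. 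I expect the main obstacle to be the final bookkeeping: verifying that the swap creates no \emph{new} bad clique. The clean route is to note that every edge the swap touches has an endpoint in $\{u,v\}\cup(Q\setminus\{u\})\cup(Q'\setminus\{v\})$, and all these vertices land in the single large merged component; hence every other vertex set retains its old adjacencies and cannot become a fresh isolated $K_{n/k+1}$, while the remaining bad cliques are untouched. So $F'$ has strictly fewer bad cliques than $F$, contradicting minimality and finishing the proof.
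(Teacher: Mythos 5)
Your proposal is correct and follows essentially the same route as the paper: both rest on Proposition~\ref{prop:charOneRoundMore}, observe that every edge between a bad $K_{\frac{n}{k}+1}$ component and the rest of the graph is still present in the feasibility graph before the last round, and then repair round $\lfloor \frac{n}{k(k-1)}\rfloor$ by swapping a vertex of the bad component with an outside vertex. The differences are only organizational: the paper justifies the cross-edge availability by an edge count and eliminates all bad components simultaneously with a cyclic exchange, whereas you use the $(n/k)$-regularity of $\bar{G}_R$ together with an extremal (fewest-bad-components) argument that removes them one swap at a time.
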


\begin{proof}
A tournament with $\lfloor \frac{n}{k(k-1)}\rfloor$ rounds is either extendable or by \Cref{prop:charOneRoundMore}, at least one connected component of the complement of the feasibility graph is equal to $K_{\frac{n}{k}+1}$. In the former case, we are done. So assume the latter case.  Denote the connected components that are equal to $K_{\frac{n}{k}+1}$
by $\bar{G}_1, \ldots , \bar{G}_c$. First, we shorten the tournament by eliminating the last round and then extend it by two other rounds.

First of all notice that to end up with $\bar{G}_i=K_{\frac{n}{k}+1}$ after $\lfloor \frac{n}{k(k-1)}\rfloor$ rounds all matches between vertices in $\bar{G}_i$ need to be scheduled entirely inside $\bar{G}_i$. The reason for this is that $\bar{G}_i$ has $\frac{n^2}{2k^2}+ \frac{n}{2k}$ edges which is the maximum number of edges that can arise from $\lfloor \frac{n}{k(k-1)}\rfloor$ rounds with $\frac{n}{k}+1$ players.

Clearly, the last round of the original tournament corresponds to a $K_k$-factor in the feasibility graph of the shortened tournament. By the assumed structure of the feasibility graph, all cliques $K_k$ are either completely within $\bar{G}_i$, $i \in \{1,\dots,c\}$ or completely within $V \setminus \bigcup_{i \in \{ 1, \ldots , c\}} \bar{G}_i$. Thus, for each $i \in \{1,\dots,c\}$, all edges between $\bar{G}_i$ and $V \setminus \bar{G}_i$ are not present in the complement of the feasibility graph.

If $c=1$, select a vertex $v_1 \in \bar{G}_1$ and $v_2 \in V \setminus \bar{G}_1$. Exchange these vertices to get a $K_k$-factor with which the shortened tournament is extended. More precisely, $v_1$ is paired with the former clique of $v_2$ and vice versa, while all remaining cliques stay the same.
Since $k<\frac{n}{k}+1$ by assumption, this ensures that there is no set of $\frac{n}{k}+1$ vertices for which we have only scheduled matches within this group. Thus, after extending the tournament, no connected component in the complement of the feasibility graph corresponds to $K_{\frac{n}{k}+1}$.
By \Cref{prop:charOneRoundMore}, the tournament can be extended to have $\lfloor \frac{n}{k(k-1)}\rfloor + 1$ rounds.

If $c>1$, we select a vertex $v_i$ from each $\bar{G}_i$ for $i \in \{ 1,\ldots , c\}$.
We exchange the vertices in a cycle to form new cliques, i.e., $v_i$ is now paired with the vertices in the old clique of $v_{i+1}$ for all $i \in \{1, \dots, c\}$, where $v_{c+1}=v_1$. By adding this new $K_k$-factor, we again ensure that there is no set of $\frac{n}{k}+1$ vertices for which we have only scheduled matches within this group. By applying \Cref{prop:charOneRoundMore} we can extend the tournament for another round, which finishes the proof.
\end{proof}

\section{The Greedy Oberwolfach Problem} \label{sec:gop}

In this section we consider tournaments with $H=C_k$ for $k \geq 3$. Dependent on the number of participants $n$ and $k$ we derive bounds on the number of rounds that can be scheduled greedily in such a tournament.

Before we continue with the theorem, we first state two graph-theoretic results and a conjecture.

\subsubsection*{Aigner-Brandt Theorem \citep{aigner1993embedding}.}
Let $G$ be a graph with minimum degree $\delta(G) \geq \frac{2n-1}{3}$. Then $G$ contains any graph $H$ with at most $n$ vertices and maximum degree $\Delta(H)= 2$ as a subgraph.

\subsubsection*{Alon-Yuster Theorem \citep{alon1996h}.}
For every $\epsilon>0$ and for every $k\in\mathbb{N}$, there exists an $n_0=n_0(\epsilon,k)$ such that for every graph $H$ with $k$ vertices and for every $n>n_0$, any graph $G$ with $nk$ vertices and minimum degree $\delta(G)\geq \left(\frac{\chi(H)-1}{\chi(H)}+\epsilon\right)nk$ has an $H$-factor that can be computed in polynomial time. Here, $\chi(H)$ denotes the chromatic number of $H$, i.e., the smallest possible number of colors for a vertex coloring of $H$.

\subsubsection*{El-Zahar's Conjecture \citep{el1984circuits}.}
Let $G$ be a graph with $n=k_1+\ldots+k_{\ell}$. If $\delta(G)\geq \lceil \frac{1}{2} k_1 \rceil + \ldots + \lceil \frac{1}{2} k_\ell \rceil$, then G contains $\ell$ vertex disjoint cycles of lengths $k_1, \ldots, k_\ell$.

El-Zahar's Conjecture is proven to be true for $k_1=\ldots=k_{\ell}=3$ \citep{corradi1963maximal}, $k_1=\ldots=k_{\ell}=4$ \citep{wang2010proof}, and $k_1=\ldots=k_{\ell}=5$ \citep{wang2012disjoint}.

\begin{theorem}\label{thm:obe1}
Let $H=C_k$, Algorithm~\ref{algo:greedy} outputs a tournament with at least
\begin{enumerate}
    \item $\lfloor\frac{n+4}{6}\rfloor$ rounds for all $n \in \mathbb{N}$\;,
    \item $\lfloor\frac{n+2}{4}-\epsilon\cdot n \rfloor$ rounds for $n$ large enough, $k$ even and for fixed $\epsilon>0$\;,
\end{enumerate}
If El-Zahar's conjecture is true, the number of rounds improves to $\lfloor\frac{n+2}{4}\rfloor$ for $k$ even and  $\lfloor\frac{n+2}{4}-\frac{n}{4k}\rfloor$ for $k$ odd and all $n \in \mathbb{N}$.
\end{theorem}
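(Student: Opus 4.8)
The plan is to mirror the degree-counting strategy from the proof of Theorem~\ref{thm:k>2}, but now invoke cycle-embedding theorems instead of equitable colorings. In each case I want to run Algorithm~\ref{algo:greedy} and argue that as long as the claimed number of rounds has not yet been reached, the minimum degree of the feasibility graph $G$ is still high enough to force the existence of a $C_k$-factor, so the algorithm cannot terminate. Since each $C_k$-factor removes exactly two edges per vertex, after $r$ rounds every vertex of $G$ has degree $n-1-2r$. The whole proof reduces to matching this degree against the hypothesis of the relevant embedding result and solving for the largest $r$ that is still safe.

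\textbf{Part 1 (the $\lfloor (n+4)/6\rfloor$ bound).} First I would use the Aigner-Brandt Theorem, whose hypothesis is $\delta(G)\geq \tfrac{2n-1}{3}$ and whose conclusion gives \emph{any} spanning subgraph of maximum degree~$2$ — in particular a $C_k$-factor, which is a disjoint union of cycles covering all $n$ vertices with every vertex of degree~$2$. So I require $n-1-2r \geq \tfrac{2n-1}{3}$, i.e. $2r \leq n-1-\tfrac{2n-1}{3} = \tfrac{n-2}{3}$, giving $r \leq \tfrac{n-2}{6}$. Hence after at most $\lfloor (n+4)/6\rfloor - 1$ rounds the degree condition still holds and another round exists; the arithmetic of the floor should be checked but is routine. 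A point to verify is that the $H$ in Aigner-Brandt may be an arbitrary max-degree-$2$ graph on $n$ vertices, so I must take $H$ to be the disjoint union of $n/k$ copies of $C_k$ and confirm $\Delta(H)=2$ and $|V(H)|=n$, which is immediate.

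\textbf{Parts 2 and the El-Zahar improvements.} For the polynomial-time $\lfloor (n+2)/4 - \epsilon n\rfloor$ bound with $k$ even, I would instead apply the Alon-Yuster Theorem to $H=C_k$, using $\chi(C_k)=2$ for even~$k$, so the required minimum degree is $(\tfrac12+\epsilon)n$. Setting $n-1-2r \geq (\tfrac12+\epsilon)n$ yields $2r \leq (\tfrac12-\epsilon)n - 1$, hence $r \lesssim \tfrac{n+2}{4} - \tfrac{\epsilon n}{2}$; after absorbing the factor of~$2$ into the free parameter $\epsilon$ this gives the stated bound, and Alon-Yuster guarantees the factor is found in polynomial time, which is what the approximation claim needs. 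For the conjectural improvements I replace Alon-Yuster by El-Zahar's Conjecture applied with $k_1=\dots=k_\ell=k$ and $\ell=n/k$, so the degree threshold is $\ell\lceil k/2\rceil$. When $k$ is even this is $\tfrac{n}{2}$, and $n-1-2r\geq \tfrac{n}{2}$ gives $r\leq \tfrac{n-2}{4}$, matching $\lfloor (n+2)/4\rfloor$. When $k$ is odd, $\lceil k/2\rceil = \tfrac{k+1}{2}$, so the threshold is $\tfrac{n}{k}\cdot\tfrac{k+1}{2} = \tfrac{n}{2}+\tfrac{n}{2k}$, and $n-1-2r \geq \tfrac{n}{2}+\tfrac{n}{2k}$ gives $r \leq \tfrac{n-2}{4} - \tfrac{n}{4k}$, matching the claimed $\lfloor (n+2)/4 - n/(4k)\rfloor$.

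\textbf{The main obstacle} I anticipate is purely bookkeeping with the floor and ceiling functions: each bound is of the form ``after at most $(\text{claimed count})-1$ rounds the degree is still large enough,'' and I must confirm that the integer $r=\lfloor(\cdot)\rfloor-1$ indeed satisfies the strict inequality needed, which requires care at the boundary where $n$ hits the divisibility constraints $n\equiv 0 \bmod k$. There is also a subtlety in Part~2: the factor-of-two slack between the raw El-Zahar/Alon-Yuster degree count and the stated round count means the $\epsilon$ in the theorem statement is a rescaling of the $\epsilon$ in Alon-Yuster, so I would state clearly that ``for fixed $\epsilon>0$'' is preserved under this linear reparametrization. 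None of the individual steps is deep once the right embedding theorem is paired with the right parity of~$k$; the difficulty is entirely in assembling the four sub-bounds cleanly and keeping the degree inequalities consistent with the floor notation.
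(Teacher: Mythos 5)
Your proposal is correct and follows essentially the same route as the paper's proof: the identical degree count $n-1-2r$ per round, with Aigner--Brandt for the $\lfloor (n+4)/6\rfloor$ bound, Alon--Yuster (with the same $\epsilon' = 2\epsilon$ rescaling) for the even-$k$ polynomial bound, and El-Zahar's conjecture with $k_1=\dots=k_\ell=k$, $\ell=n/k$ for the conditional improvements. The floor-arithmetic details you flag as routine indeed check out exactly as in the paper (e.g.\ $\lfloor\frac{n-2}{6}\rfloor+1=\lfloor\frac{n+4}{6}\rfloor$).
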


\begin{proof}
\textbf{Statement 1.} Recall that Algorithm \ref{algo:greedy} starts with the empty tournament and the corresponding feasibility graph is the complete graph, where the degree of every vertex is $n-1$. In each iteration of the algorithm, a $C_k$-factor is deleted from the feasibility graph and thus every vertex loses $2$ edges.
We observe that as long as the constructed tournament has at most $\lfloor\frac{n-2}{6}\rfloor$ rounds, the degree of every vertex in the feasibility graph is at least $n-1-\lfloor\frac{n-2}{3}\rfloor \geq \frac{2n-1}{3}$. 
Since a $C_k$-factor with $n$ vertices has degree $2$, by the Aigner-Brandt theorem $G$ contains a $C_k$-factor. It follows that the algorithms runs for another iteration. In total, the number of rounds of the tournament is at least $\lfloor\frac{n-2}{6}\rfloor+1 = \lfloor\frac{n+4}{6}\rfloor$.

\textbf{Statement 2.} Assume $k$ is even. We have that the chromatic number $\chi(C_k)=2$. As long as Algorithm~\ref{algo:greedy} runs for at most $\lfloor\frac{n-2}{4}-\epsilon\cdot n\rfloor$ iterations, the degree of every vertex in the feasibility graph is at least $n-1-2 \cdot \lfloor\frac{n-2}{4}-\epsilon\cdot n\rfloor\geq n-1-\frac{n-2}{2}+2\epsilon\cdot n = \frac{n}{2}+2\epsilon\cdot n$. Hence by the Alon-Yuster theorem with $k'=k$, $n'=\frac{n}{k}$ and $\epsilon'=2\epsilon$, a $C_k$-factor exists for $n$ large enough and thus another iteration is possible. This implies that Algorithm~\ref{algo:greedy} is guaranteed to construct a tournament with $\lfloor\frac{n-2}{4}-\epsilon\cdot n\rfloor + 1$ rounds.

\textbf{Statement El-Zahar, $k$ even.} As long as Algorithm~\ref{algo:greedy} runs for at most $\lfloor\frac{n-2}{4}\rfloor$ iterations, the degree of every vertex in the feasibility graph is at least $n-1-2 \cdot \lfloor\frac{n-2}{4}\rfloor\geq n-1-\frac{n-2}{2} = \frac{n}{2}$. Hence from El-Zahar's conjecture with $k_1 = k_2 = \dots = k_\ell = k$ and $\ell=\frac{n}{k}$, we can deduce that a $C_k$-factor exists as $\frac{k}{2}\cdot \frac{n}{k}=\frac{n}{2}$, and thus another iteration is possible. This implies that Algorithm~\ref{algo:greedy} is guaranteed to construct a tournament with $\lfloor\frac{n-2}{4}\rfloor + 1$ rounds.

\textbf{Statement El-Zahar, $k$ odd.} As long as Algorithm~\ref{algo:greedy} runs for at most $\lfloor\frac{n-2}{4}-\frac{n}{4k}\rfloor$ iterations, the degree of every vertex in the feasibility graph is at least $n-1-\frac{n-2}{2}+\frac{n}{2k} = \frac{n}{2}+ \frac{n}{2k}$. Hence from El-Zahar's conjecture with $k_1 = k_2 = \dots = k_\ell = k$ and $\ell=\frac{n}{k}$, we can deduce that a $C_k$-factor exists as $\frac{k+1}{2}\cdot \frac{n}{k}=\frac{n}{2}+ \frac{n}{2k}$, and thus the constructed tournament can be extended by one more round. This implies that the algorithm outputs a tournament with at least $\lfloor\frac{n-2}{4}-\frac{n}{4k}\rfloor +1$ rounds.
\end{proof}

\begin{proposition}\label{prop:obe2}
Let $H=C_k$ for fixed $k$. Algorithm~\ref{algo:greedy} can be implemented such that it runs in polynomial time for at least
\begin{enumerate}
    \item $\lfloor\frac{n+2}{4}-\epsilon\cdot n \rfloor$ rounds, for $k$ even, and fixed $\epsilon>0$, or stops if, in the case of small $n$, no additional round is possible\;,
    \item $\lfloor\frac{n+3}{6}-\epsilon \cdot n\rfloor$ rounds, for $k$ odd, and fixed $\epsilon>0$\;.
\end{enumerate}
\end{proposition}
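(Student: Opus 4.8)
The plan is to repeat the degree-counting argument behind Statement~2 of Theorem~\ref{thm:obe1}, but to replace the existence guarantees that are not stated constructively (Aigner--Brandt and, conditionally, El-Zahar) by the Alon--Yuster theorem, whose conclusion explicitly yields a $C_k$-factor in polynomial time once the minimum degree is large enough. The only genuinely new ingredient compared to Theorem~\ref{thm:obe1} is the requirement in Alon--Yuster that the number of copies $n/k$ exceed a threshold $n_0=n_0(\epsilon,k)$; since $\epsilon$ and $k$ are fixed, $n_0$ is a constant, so the regime $n\leq n_0$ can be treated by brute force in constant time.

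Concretely, I would implement the loop of Algorithm~\ref{algo:greedy} so that, at each iteration, it looks for a $C_k$-factor in the current feasibility graph $G$ as follows: if $n>n_0$ it calls Alon--Yuster (with $k'=k$ and $\epsilon'=2\epsilon$), and if $n\leq n_0$ it enumerates $C_k$-factors directly, which is constant time because $n$ is bounded. For $k$ even I use $\chi(C_k)=2$: exactly as in Theorem~\ref{thm:obe1}, as long as at most $\lfloor\frac{n-2}{4}-\epsilon n\rfloor$ rounds have been removed, every vertex of $G$ still has degree at least $(\tfrac12+2\epsilon)n$, matching the Alon--Yuster requirement $\delta(G)\geq(\tfrac{\chi(C_k)-1}{\chi(C_k)}+\epsilon')n=(\tfrac12+2\epsilon)n$, so another factor is found in polynomial time; this gives $\lfloor\frac{n-2}{4}-\epsilon n\rfloor+1=\lfloor\frac{n+2}{4}-\epsilon n\rfloor$ rounds whenever $n>n_0$. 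For $k$ odd I instead use $\chi(C_k)=3$: keeping the degree at least $(\tfrac23+2\epsilon)n$ allows up to $\lfloor\frac{n-3}{6}-\epsilon n\rfloor$ removed rounds before a further factor is guaranteed, yielding $\lfloor\frac{n+3}{6}-\epsilon n\rfloor$ rounds for $n>n_0$.

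It remains to settle the regime $n\leq n_0$, and this is exactly where the two cases diverge. For $k$ odd the target $\lfloor\frac{n+3}{6}-\epsilon n\rfloor$ never exceeds $\lfloor\frac{n+4}{6}\rfloor$, and Statement~1 of Theorem~\ref{thm:obe1} guarantees that at least $\lfloor\frac{n+4}{6}\rfloor$ rounds \emph{exist} for every $n$; hence the constant-time brute-force search is certain to find them, and the bound holds for all $n$ with no exception. For $k$ even, however, the comparison $\frac{n+2}{4}-\epsilon n>\frac{n+4}{6}$ holds once $n>2$, so for small $n$ the target lies strictly above the universal bound of Theorem~\ref{thm:obe1} and I cannot guarantee it; there the brute-force search simply continues until no $C_k$-factor is left, which is precisely the ``or stops'' clause in the statement.

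The main obstacle I anticipate is not combinatorial but bookkeeping: I must confirm that $n_0$ depends only on the fixed parameters $\epsilon$ and $k$ (so that the brute-force branch is honestly constant-time and the overall algorithm polynomial), and I must articulate cleanly why the small-$n$ analysis forces the weaker ``or stops'' formulation for $k$ even while leaving the $k$ odd bound unconditional --- the crux being the comparison of both target bounds with the Aigner--Brandt bound $\lfloor\frac{n+4}{6}\rfloor$. The degree estimates and floor identities are routine and identical in spirit to the proof of Theorem~\ref{thm:obe1}.
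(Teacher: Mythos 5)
Your proposal is correct and follows essentially the same route as the paper's proof: Alon--Yuster with $\epsilon'=2\epsilon$ and the degree counts from Theorem~\ref{thm:obe1} for $n>n_0$, constant-time enumeration for the constant-size regime $n\leq n_0$, and the observation that for $k$ odd the target $\lfloor\frac{n+3}{6}-\epsilon n\rfloor$ is dominated by the unconditional Aigner--Brandt guarantee $\lfloor\frac{n+4}{6}\rfloor$, which is exactly why the paper (like you) needs the ``or stops'' clause only in the even case.
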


\begin{proof}
\textbf{Case 1.} Assume $k$ is even. By the Alon-Yuster theorem, analogously to Case 2 of Theorem~\ref{thm:obe1}, the first $\lfloor\frac{n+2}{4}-\epsilon\cdot n \rfloor$ rounds exist and can be computed in polynomial time, given that $n>n_0$ for some $n_0$ that depends on $\epsilon$ and $k$. Since $\epsilon$ and $k$ are assumed to be constant, $n_0$ is constant. By enumerating all possibilities in the case $n\leq n_0$, we can bound the running time for all $n \in \mathbb{N}$ by a polynomial function in $n$. Note that the Alon-Yuster theorem only implies existence of $\lfloor\frac{n+2}{4}-\epsilon \cdot n\rfloor$ rounds if $n>n_0$, so it might be that the algorithm stops earlier, but in polynomial time, for $n\leq n_0$.

\textbf{Case 2.} Assume $k$ is odd. First note that the existence of the first $\lfloor\frac{n-3}{6}-\epsilon\cdot n\rfloor \leq \lfloor \frac{n+4}{6} \rfloor$ rounds follows from Theorem~\ref{thm:obe1}. Observe that for odd cycles $C_k$ the chromatic number is $\chi(C_k)=3$. As long as Algorithm~\ref{algo:greedy} runs for at most $\lfloor\frac{n-3}{6}-\epsilon\cdot n\rfloor$ iterations, the degree of every vertex in the feasibility graph is at least $n-1-2 \cdot \lfloor\frac{n-3}{6}-\epsilon\cdot n\rfloor\geq n-1-\frac{n-3}{3}+2\epsilon\cdot n = \frac{2n}{3}+2\epsilon\cdot n$. Hence by the Alon-Yuster theorem with $\epsilon'=2\epsilon$, there is an $n_0$ dependent on $k$ and $\epsilon$ such that a $C_k$-factor can be computed in polynomial time for all $n>n_0$. Since $\epsilon$ and $k$ are assumed to be constant, $n_0$ is constant. By enumerating all possibilities for $n\leq n_0$ we can bound the running time of the algorithm by a polynomial function in $n$ for all $n \in \mathbb{N}$.
\end{proof}

\begin{corollary}
For any fixed $\epsilon>0$, Algorithm~\ref{algo:greedy} is a $\frac{1}{3+\epsilon} $-approximation algorithm for the Oberwolfach problem.
\end{corollary}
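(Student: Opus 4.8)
The plan is to compare the number of rounds produced by the polynomial-time implementation of \ref{prop:obe2} against a trivial upper bound on the optimum. First I would record that the maximum number of rounds of any tournament with $H=C_k$ satisfies $\mathrm{OPT}\leq \frac{n-1}{2}$: in $K_n$ every vertex has degree $n-1$, and each $C_k$-factor removes exactly two edges incident to every vertex, so at most $\lfloor \frac{n-1}{2}\rfloor$ factors can be deleted in total. This holds even with full preplanning and is all we need on the upper side.

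Next I would split on the parity of $k$, treating the odd case as the bottleneck since it gives the weakest guarantee and hence pins down the constant $3$. Given the target $\epsilon>0$, I would apply \ref{prop:obe2} with internal slack $\epsilon':=\frac{\epsilon}{12(3+\epsilon)}$ (any value strictly below $\frac{\epsilon}{6(3+\epsilon)}$ works). For $k$ odd this yields, in polynomial time, a tournament with at least $\lfloor \frac{n+3}{6}-\epsilon' n\rfloor$ rounds, so dividing by $\mathrm{OPT}\leq \frac{n-1}{2}$ gives a ratio tending to $\frac{1}{3}-2\epsilon'$ as $n\to\infty$. By the choice of $\epsilon'$ this limit equals $\frac13-\frac{\epsilon}{6(3+\epsilon)}$, which exceeds $\frac{1}{3+\epsilon}=\frac13-\frac{\epsilon}{3(3+\epsilon)}$ by a fixed margin; hence for all $n$ beyond some threshold $N(\epsilon,k)$ the ratio is at least $\frac{1}{3+\epsilon}$. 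For $k$ even the same computation with the stronger guarantee $\lfloor \frac{n+2}{4}-\epsilon' n\rfloor$ produces an asymptotic ratio near $\frac12$, comfortably above $\frac{1}{3+\epsilon}$.

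It remains to handle the finitely many instances $n\leq N(\epsilon,k)$. Since $k$ and $\epsilon$ are fixed and $n$ is bounded, the number of candidate tournaments is constant, so an optimal schedule can be found by exhaustive search in constant time, giving ratio $1$ on this range. Alternatively one may keep the greedy implementation and invoke Statement~1 of \ref{thm:obe1}, which guarantees $\lfloor \frac{n+4}{6}\rfloor\geq \frac{n-1}{6}$ rounds for every $n$, so that the ratio to $\mathrm{OPT}$ is at least $\frac{(n-1)/6}{(n-1)/2}=\frac13>\frac{1}{3+\epsilon}$ here as well. Combining the two regimes shows the implementation runs in polynomial time and is a $\frac{1}{3+\epsilon}$-approximation.

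The hard part is the bookkeeping linking the two roles of $\epsilon$: the approximation slack $\epsilon$ in the statement and the Alon--Yuster slack $\epsilon'$ fed into \ref{prop:obe2}. One must pick $\epsilon'$ small enough that the asymptotic ratio $\frac13-2\epsilon'$ clears $\frac{1}{3+\epsilon}$ with a margin large enough to absorb both the floor and the lower-order $+3$ (respectively $+2$) term for large $n$, while simultaneously verifying that the induced threshold $N(\epsilon,k)$ is finite so that the remaining instances form a genuinely constant-size computation. Once the odd case is settled the parity split is routine, as the even case only strengthens the constant.
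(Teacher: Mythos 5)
Your proof is correct and follows essentially the same route as the paper: the same rescaled slack $\epsilon'=\frac{\epsilon}{12(3+\epsilon)}$ fed into Proposition~\ref{prop:obe2}, the same upper bound $\mathrm{OPT}\leq\frac{n-1}{2}$, and the same small-$n$ fallback (greedy with per-round enumeration plus the bound $\lfloor\frac{n+4}{6}\rfloor\geq\frac{n-1}{6}$, giving ratio $\frac{1}{3}$). The only difference is presentational: the paper verifies the large-$n$ case by explicit algebra with the concrete threshold $n\geq\frac{12}{\epsilon}+6$, whereas you argue asymptotically with an unspecified threshold $N(\epsilon,k)$.
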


\begin{proof}
Fix $\epsilon > 0$.
\paragraph{Case 1} If $n \geq \frac{12}{\epsilon} +6$, we  choose $\epsilon '= \frac{1}{(3+ \epsilon) \frac{12}{\epsilon}}$ and use Proposition~\ref{prop:obe2} with $\epsilon'$. We observe
\begin{align*}&\left\lfloor\frac{n+3}{6}-\frac{1}{(3+ \epsilon) \frac{12}{\epsilon}}\cdot n\right\rfloor \cdot (3 + \epsilon) \geq  \left(\frac{n-3}{6}-\frac{1}{(3+ \epsilon) \frac{12}{\epsilon}}\cdot n\right) \cdot (3 + \epsilon) \\
= &\left(\frac{(n-3)(3 + \epsilon)}{6}-\frac{\epsilon}{12}\cdot n\right) = \left(\frac{n-3}{2} + \frac{(n-3) \epsilon}{6} -\frac{\epsilon}{12}\cdot n\right)\\ = &\left(\frac{n-3}{2} + \frac{(2 n \epsilon -6 \epsilon)}{12} -\frac{\epsilon n}{12}\right) = \frac{n-1}{2} + \frac{\epsilon( n  -6)-12}{12} \geq \frac{n-1}{2} \geq \text{OPT}\;.
\end{align*} 
\paragraph{Case 2} If $n < \frac{12}{\epsilon} +6$, 
$n$ is a constant and we can find a cycle-factor in each round by enumeration. By the Aigner-Brandt theorem the algorithm outputs $\lfloor \frac{n+4}{6}\rfloor \geq \frac{n-1}{6}$ rounds. This implies a $\frac{1}{3}> \frac{1}{3 + \epsilon}$ approximation algorithm.
\end{proof}

In the rest of the section, we show that the bound corresponding to El-Zahar's conjecture presented in Theorem \ref{thm:obe1} is essentially tight. Through a case distinction, we provide matching examples that show the tightness of the bounds provided by El-Zahar's conjecture for two of three cases. For $k$ even but not divisible by $4$, an additive gap of one round remains. All other cases are tight. Note that this implies that any improvement of the lower bound via an example by just one round (or by two for $k$ even but not divisible by $4$) would disprove El-Zahar's conjecture.

\begin{theorem}
There are infinitely many $n \in \N$ for which there exists a tournament with $H=C_k$ that is not extendable after
\begin{enumerate}
    \item $\lfloor\frac{n+2}{4}-\frac{n}{4k}\rfloor$ rounds if $k$ is odd\;,
    \item $\lfloor\frac{n+2}{4}\rfloor$ rounds if $k \equiv 0 \mod{4}$\;,
    \item $\lfloor\frac{n+2}{4}\rfloor+ 1$ rounds if $k \equiv 2 \mod{4}$\;.
\end{enumerate}
\end{theorem}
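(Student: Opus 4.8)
The plan is to construct, for each residue class of $k$, an explicit list of $r$ edge-disjoint $C_k$-factors of $K_n$ whose removal leaves a feasibility graph $G$ that contains no $C_k$-factor, where $r$ is exactly the claimed number of rounds. The guiding structural fact is that every round is a $2$-regular factor, so after $r$ rounds the \emph{deleted} graph $\bar G = K_n\setminus G$ is precisely $2r$-regular and decomposes into $r$ $C_k$-factors, while $G$ is $(n-1-2r)$-regular. Hence I must design a \emph{regular} complement $\bar G$ that simultaneously (i) admits a $C_k$-factorization into $r$ factors and (ii) has a complement $G$ with no $C_k$-factor. This regularity requirement is exactly what rules out the usual irregular extremal graphs for El-Zahar's bound (a small clique joined to a large independent set) and forces three genuinely different constructions, separated by the parity of $k$ and of $n/2$. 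Two obstruction mechanisms will be used: a large \emph{independent set} in $G$ (using that a $C_k$ contains at most $\lfloor k/2\rfloor$ pairwise non-adjacent vertices), or a \emph{disconnected} $G$ whose component sizes are not divisible by $k$.

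\textbf{Case $k$ odd (independent-set construction).} Here I would fix $A$ with $|A|=2r+1$ and set $\bar G = K_{|A|}\sqcup B$, where $B$ is a $2r$-regular $C_k$-factorable graph on the remaining $n-|A|$ vertices. Choosing $n$ even makes both $|A|=2r+1$ and $|B|=n-|A|$ odd, so, since $k$ is odd and $k\mid|A|,|B|$, Alspach's theorem (\citet{alspach1989oberwolfach}) gives a $C_k$-factorization of $K_{|A|}$ into exactly $(|A|-1)/2=r$ factors and a $C_k$-factorization of $K_{|B|}$ from which I take $r$ factors (enough exist once $n$ is large). Combining the $i$-th $A$-factor with the $i$-th $B$-factor yields $r$ genuine $C_k$-factors of $K_n$, after which $A$ is independent in $G$. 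Since any odd cycle $C_k$ meets an independent set in at most $(k-1)/2$ vertices, the $n/k$ cycles of a hypothetical factor cover at most $\tfrac{n}{k}\cdot\tfrac{k-1}{2}=\tfrac n2-\tfrac n{2k}$ vertices of $A$; with $r=\lfloor\tfrac{n+2}{4}-\tfrac{n}{4k}\rfloor$ attained exactly one gets $|A|=\tfrac n2-\tfrac n{2k}+2$, which exceeds this threshold, so $G$ has no $C_k$-factor.

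\textbf{Case $k$ even (disconnected construction and the parity split).} For even $k$ the independent-set trick is obstructed, because $k\mid|A|$ forces $|A|$ even, contradicting $|A|=2r+1$; so I would instead take $G=K_{n/2}\sqcup K_{n/2}$, giving $\bar G=K_{n/2,n/2}$. Blocking now comes from divisibility: choosing $n/k$ odd makes $n/2$ not a multiple of $k$, so neither component admits a $C_k$-tiling. When $k\equiv 0\mod{4}$ the factor $k/2$ is even, so $m:=n/2$ is even, $K_{m,m}$ is even-regular, and the equipartite Oberwolfach results (\citet{liu2003equipartite}) furnish a $C_k$-factorization into $m/2=n/4$ factors, giving $r=n/4=\lfloor\tfrac{n+2}{4}\rfloor$ rounds and hence tightness. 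When $k\equiv 2\mod{4}$, however, $k/2$ is odd, so the same choice $n/k$ odd forces $m=n/2$ \emph{odd}: then $K_{m,m}$ is odd-regular and cannot be $2$-factorized at all, so the $r=n/4$ example is simply not realizable greedily. To repair this I would switch to an independent set $A$ with $|A|=n/2+1>n/2$ (which blocks for even $k$ since each $C_k$ meets it in at most $k/2$ vertices), but now $A$ cannot be an isolated clique of $\bar G$; instead each vertex of $A$ must send two edges into $V\setminus A$, so $\bar G$ only becomes even-regular, and $C_k$-factorable, at degree $n/2+2$, i.e.\ after $r=n/4+1=\lfloor\tfrac{n+2}{4}\rfloor+1$ rounds. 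This one extra round needed to restore even regularity is exactly the source of the additive gap against El-Zahar's lower bound.

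\textbf{Main obstacle and conclusion.} I expect the hard part to be the $k\equiv 2\mod{4}$ case: one must actually exhibit (or invoke an existence theorem for) a $C_k$-factorization of the non-bipartite $(n/2+2)$-regular complement $\bar G$ containing the clique $K_{n/2+1}$, and dually confirm the non-factorizability that blocks the shorter $r=n/4$ attempt; this is the delicate Oberwolfach-type step and the reason the bound cannot currently be pushed below $\lfloor\tfrac{n+2}{4}\rfloor+1$. In all three cases I would finish by selecting $n$ inside a fixed arithmetic progression chosen to meet every side condition simultaneously — $k\mid n$, the required parity of $n/2$ (equivalently of $n/k$), the divisibility $k\mid|A|$ where needed, and the size thresholds demanded by the cited factorization theorems — which yields the required infinitude of admissible $n$ and completes the proof.
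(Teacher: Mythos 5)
Your Cases 1 and 2 are essentially the paper's own constructions: for odd $k$ you combine Alspach-type Oberwolfach solutions inside an odd set $A$ (with $k\mid |A|$, $|A|=2r+1$) and inside $B$, so that $A$ becomes independent in the feasibility graph, and block because an odd cycle meets an independent set in at most $\frac{k-1}{2}$ vertices; for $k \equiv 0 \pmod 4$ you delete $K_{n/2,n/2}$ using the equipartite factorization of \citet{liu2003equipartite} and block by the divisibility $k \nmid n/2$. Two cosmetic slips there: $|A|=2r+1=\frac n2-\frac n{2k}+1$, not $+2$, and the condition $k \mid |A|$ needs an explicit choice of $n$ (e.g.\ $n=2k(1+\lambda k)$ works, or the paper's $n=2k\sum_{j\le i}k^j$), but both are easily repaired.

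Case 3 contains a genuine gap, and it is worse than the missing ``delicate Oberwolfach-type step'' you flag: the structure you propose cannot exist at all. You want the feasibility graph after $r=\lfloor\frac{n+2}{4}\rfloor+1$ rounds to contain an independent set $A$ with $|A|=\frac n2+1$, i.e.\ $\bar G \supseteq K_{\frac n2+1}$. But, as you yourself stress in your overall strategy, $\bar G$ is exactly $2r$-regular. Each vertex of $A$ then has exactly $x:=2r-\frac n2$ neighbours in $B=V\setminus A$, while each of the $\frac n2-1$ vertices of $B$ has at most $\frac n2-2$ neighbours inside $B$ and hence at least $x+2$ neighbours in $A$. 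Double counting the crossing edges gives $\left(\frac n2+1\right)x \geq \left(\frac n2-1\right)(x+2)$, i.e.\ $x\geq \frac n2-1$, hence $r\geq \frac{n-1}{2}$. So an independent set of size $\frac n2+1$ can only occur in (essentially) a complete tournament, never after roughly $n/4$ rounds; the regularity constraint you used to rule out the ``usual irregular extremal graphs'' kills your own Case 3 graph too. (Your arithmetic also drifts: with $n/2$ odd, $r=n/4+1$ is not an integer; the target is $r=\frac{n+6}{4}$, forcing $\bar G$-degree $\frac n2+3$, not $\frac n2+2$.) The paper avoids independent sets entirely here and blocks by a parity-of-crossing-edges argument: it builds a base tournament of $\frac{n-2}{4}$ rounds that uses only $A$--$B$ edges (doubling Alspach $C_{k/2}$-factors on each half, using that $k/2$ is odd, in the style of \citet{archdeacon2004cycle}), and then two hand-crafted extra rounds that destroy all but \emph{one} crossing edge. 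Since every cycle uses an even number of crossing edges, a further $C_k$-factor would have to split into cycles inside $A$ and inside $B$ separately, which is impossible because $|A|=|B|=\frac n2$ is not divisible by $k$ (as $n/k$ is odd). Some mechanism of this kind --- near-disconnection plus a divisibility obstruction, rather than a large independent set --- is what your proof is missing for $k\equiv 2 \pmod 4$.
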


\begin{proof}
\textbf{Case 1.} Assume $k$ is odd. Let $n =2k\sum_{j=0}^i k^j$ for some integer $i\in\N$.
We construct a tournament with $n$ participants and $H=C_k$. To do so, we start with the empty tournament and partition the set of vertices of the feasibility graph into two disjoint sets $A$ and $B$. The sets are chosen such that $A \cup B = V$, and $|A| = \frac{n}{2}-\frac{n}{2k}+1= (k-1)\sum_{j=0}^i k^j+1=k^{i+1}$, $|B|= \frac{n}{2}+\frac{n}{2k}-1$ vertices. We observe that $|A|\leq|B|$, since $\frac{n}{2k}\geq 1$.
We construct a tournament such that in the feasibility graph all edges between vertices in $A$ are deleted. 
To do so, we use a result of \citet{alspach1989oberwolfach}, who showed that there is a solution for the Oberwolfach problem for all odd $k$ with $n \equiv 0 \mod{k}$ and $n$ odd.

Observe that $|A| \equiv 0 \mod{k}$, thus $|B| \equiv 0 \mod{k}$. Furthermore, $|A|-1$ is even and since $n$ is even this also applies to $\abs{B}-1$. By using the equivalence of the Oberwolfach problem to complete tournaments, there exists a complete tournament within $A$ and within $B$.
We combine these complete tournaments to a tournament for the whole graph with $\min\{|A|-1, |B|-1\}/2 = \frac{|A|-1}{2} = \frac{n}{4}-\frac{n}{4k}$ rounds. Since $|A|$ is odd, the number of rounds is integral.

Considering the feasibility graph of this tournament, there are no edges between vertices in $A$. Thus, every cycle of length $k$ can cover at most $\frac{k-1}{2}$ vertices of $A$. We conclude that there is no $C_k$-factor for the feasibility graph, since $\frac{n}{k}\cdot \frac{k-1}{2}=\frac{n}{2}-\frac{n}{2k}$, so we cannot cover all vertices of $A$. Thus, we constructed a tournament with $\frac{n}{4}-\frac{n}{4k}=\lfloor\frac{n+2}{4}-\frac{n}{4k}\rfloor$ rounds that cannot be extended.

\textbf{Case 2.} Assume $k$ is divisible by $4$. Let $n=i\cdot k$ for some odd integer $i\in\N$. We construct a tournament with $n$ participants by dividing the vertices of the feasibility graph into two disjoint sets $A$ and $B$ such that $|A| = |B|= \frac{n}{2} = i \cdot \frac{k}{2}$. \citet{liu2003equipartite} showed that there exist $n/4$ disjoint $C_k$-factors in a complete bipartite graph with $n/2$ vertices on each side of the bipartition, if $n/2$ is even. That is, every edge of the complete bipartite graph is in exactly one $C_k$-factor. 
Since $n/2$ is even by case distinction, there is a tournament with  $n/4 = \lfloor \frac{n+2}{4} \rfloor$ rounds such that in the feasibility graph there are only edges within $A$ and within $B$ left. Since $i$ is odd, $|A| = i \cdot \frac{k}{2}$ is not divisible by $k$. Thus, it is not possible to schedule another round by choosing only cycles within sets $A$ and $B$.

\textbf{Case 3.} Assume $k$ is even, but not divisible by 4. Let $n=i\cdot k$ for some odd integer $i\in\N_{\geq 9}$.
We construct a tournament with $n$ participants that is not extendable after  $\frac{n+2}{4} + 1$ rounds in two phases. First, we partition the vertices into two disjoint sets $A$ and $B$, each of size $\frac{n}{2}$, and we construct a base tournament with $\frac{n-2}{4}$ rounds such that in the feasibility graph only edges between sets $A$ and $B$ are deleted. Second, we extend the tournament by two additional carefully chosen rounds.
After the base tournament, the feasibility graph consists of two complete graphs $A$ and $B$ connected by a perfect matching between all vertices from $A$ and all vertices from $B$. We use the additional two rounds to delete all of the matching-edges except for one. Using this, we show that the tournament cannot be extended.

In order to construct the base tournament, we first use a result of \citet{alspach1989oberwolfach}. It states that there always exists a solution for the Oberwolfach problem with $n'$ participants and cycle length $k'$ if $n'$ and $k'$ are odd and $n' \equiv 0 \mod{k'}$.

We choose $n'=n/2$ and $k' = k/2$ (observe that by assumption $k\geq 6$ and thus $k'\geq 3$) and then apply the result by \citet{alspach1989oberwolfach} to obtain a solution for the Oberwolfach problem with $n'$ and $k'$. Next we use a construction relying on an idea by \citet{archdeacon2004cycle} to connect two copies of the Oberwolfach solution. Fix the solution for the Oberwolfach problem with $n/2$ participants and cycle length $\frac{k}{2}$, and apply this solution to $A$ and $B$ separately.
Consider one round of the tournament and denote the $C_{\frac{k}{2}}$-factor in $A$ by $(a_{1+j}, a_{2+j}, \dots, a_{\frac{k}{2}+j})$ for $j=0, \frac{k}{2}, k, \dots, \frac{n}{2}-\frac{k}{2}$. By symmetry, the $C_{\frac{k}{2}}$-factor in $B$ can be denoted by $(b_{1+j}, b_{2+j}, \dots, b_{\frac{k}{2}+j})$ for $j=0, \frac{k}{2}, k, \dots, \frac{n}{2}-\frac{k}{2}$. We design a $C_k$-factor in the feasibility graph of the original tournament.
 For each $j \in \{0, \frac{k}{2}, k, \dots, \frac{n}{2}-\frac{k}{2}\}$, we construct a cycle $(a_{1+j},b_{2+j},a_{3+j}, \dots, a_{\frac{k}{2}+j},b_{1+j}, a_{2+j},b_{3+j},\dots,b_{\frac{k}{2}+j})$ of length $k$ in $G$. These edges are not used in any other round due to the construction and we used the fact that $\frac{k}{2}$ is odd. We refer to \Cref{fig:basetournament} for an example of one cycle for $k=10$. Since each vertex is in one cycle in each round, the construction yields a feasible round of a tournament. Applying this procedure to all rounds yields the base tournament with $\frac{n-2}{4}$ rounds.
 
\begin{figure}[t]
\begin{minipage}{0.48\textwidth}
\centering
\begin{tikzpicture}[scale=0.7]
    \draw (0,0) ellipse (4cm and 1.1cm);
    \draw (0,-2.5) ellipse (4cm and 1.1cm);
   \node[circle,fill=black,inner sep=0pt,minimum size=5pt] at (-2,0){};
        \node[left] at (-2,0) {$a_1$};
    \node[circle,fill=black,inner sep=0pt,minimum size=5pt] at (-1,-0.5){};
        \node[left] at (-1,-0.5) {$a_2$};
    \node[circle,fill=black,inner sep=0pt,minimum size=5pt] at (1,-0.5){};
           \node[right] at (1,-0.5) {$a_3$};
    \node[circle,fill=black,inner sep=0pt,minimum size=5pt] at (2,0){};
           \node[right] at (2,0) {$a_4$};
   \node[circle,fill=black,inner sep=0pt,minimum size=5pt] at (0,0.5){};
          \node[above] at (0,0.5) {$a_5$};
   
       \node[circle,fill=black,inner sep=0pt,minimum size=5pt] at (-2,-2.5){};
        \node[left] at (-2,-2.5) {$b_1$};
    \node[circle,fill=black,inner sep=0pt,minimum size=5pt] at (-1,-3){};
     \node[left] at (-1,-3) {$b_2$};
    \node[circle,fill=black,inner sep=0pt,minimum size=5pt] at (1,-3){};
         \node[right] at (1,-3) {$b_3$};
    \node[circle,fill=black,inner sep=0pt,minimum size=5pt] at (2,-2.5){};
         \node[right] at (2,-2.5) {$b_4$};
   \node[circle,fill=black,inner sep=0pt,minimum size=5pt] at (0,-2){};
         \node[above] at (0,-2) {$b_5$};

    \draw (-2,0) -- (-1,-0.5);
    \draw (-1,-0.5) -- (1,-0.5);
    \draw (1,-0.5) -- (2,0);
    \draw (2,0) -- (0,0.5);
    \draw (0,0.5) -- (-2,0);
    
        \draw (-2,-2.5) -- (-1,-3);
    \draw (-1,-3) -- (1,-3);
    \draw (1,-3) -- (2,-2.5);
    \draw (2,-2.5) -- (0,-2);
    \draw (0,-2) -- (-2,-2.5);
    \node at (0,-4){};

\end{tikzpicture}
\end{minipage}
\begin{minipage}{0.48\textwidth}
\centering
\begin{tikzpicture}[scale=0.7]
    \draw (0,0) ellipse (4cm and 1.1cm);
    \draw (0,-2.5) ellipse (4cm and 1.1cm);
   \node[circle,fill=black,inner sep=0pt,minimum size=5pt] at (-2,0){};
        \node[left] at (-2,0) {$a_1$};
    \node[circle,fill=black,inner sep=0pt,minimum size=5pt] at (-1,-0.5){};
        \node[above] at (-1,-0.5) {$a_2$};
    \node[circle,fill=black,inner sep=0pt,minimum size=5pt] at (1,-0.5){};
           \node[above] at (1,-0.5) {$a_3$};
    \node[circle,fill=black,inner sep=0pt,minimum size=5pt] at (2,0){};
           \node[right] at (2,0) {$a_4$};
   \node[circle,fill=black,inner sep=0pt,minimum size=5pt] at (0,0.5){};
          \node[left] at (0,0.5) {$a_5$};
   
       \node[circle,fill=black,inner sep=0pt,minimum size=5pt] at (-2,-2.5){};
        \node[left] at (-2,-2.5) {$b_1$};
    \node[circle,fill=black,inner sep=0pt,minimum size=5pt] at (-1,-3){};
     \node[left] at (-1,-3) {$b_2$};
    \node[circle,fill=black,inner sep=0pt,minimum size=5pt] at (1,-3){};
         \node[left] at (1,-3) {$b_3$};
    \node[circle,fill=black,inner sep=0pt,minimum size=5pt] at (2,-2.5){};
         \node[right] at (2,-2.5) {$b_4$};
   \node[circle,fill=black,inner sep=0pt,minimum size=5pt] at (0,-2){};
         \node[below] at (0,-2) {$b_5$};
    \draw (-2,0) -- (-1,-3);
    \draw (-1,-3) -- (1,-0.5);
    \draw (1,-0.5) -- (2,-2.5);
    \draw (2,-2.5) -- (0,0.5);
    \draw (0,0.5) -- (-2,-2.5);
     \draw (-2,-2.5) -- (-1,-0.5);
    \draw (-1,-0.5) -- (1,-3);
    \draw (1,-3) -- (2,0);
    \draw (2,0) -- (0,-2);
    \draw (0,-2) -- (-2,0);
    \node at (0,-4){};

\end{tikzpicture}
\end{minipage}
\caption{Construction of the base tournament. We transform two cycles of length $5$ into one cycle of length $10$.}\label{fig:basetournament}
\end{figure}
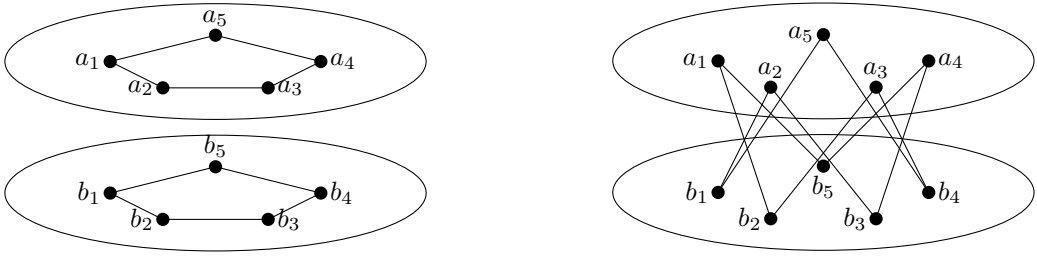
 
For each edge $e=\{a_{\bar{j}},a_j\}$ with $j\neq \bar{j}$ which is deleted in the feasibility graph of the tournament within $A$, we delete the edges $\{a_{\bar{j}},b_j\}$ and $\{a_j,b_{\bar{j}}\}$ in the feasibility graph. After the base tournament, all edges between $A$ and $B$ except for the edges $(a_1,b_1), (a_2,b_2), \dots , (a_{\frac{n}{2}},b_\frac{n}{2})$ are deleted in the feasibility graph.

In the rest of the proof, we extend the base tournament by two additional rounds. These two rounds are designed in such a way that after the rounds there is exactly one edge connecting a vertex from $A$ with one from $B$. To extend the base tournament by one round construct the cycles of the $C_k$-factor in the following way.  For $j\in\{0, \frac{k}{2}, k, \dots, \frac{n}{2}-\frac{k}{2}\}$, we construct the cycle $(a_{1+j},b_{1+j},b_{2+j},a_{2+j}, \ldots,b_{\frac{k}{2}-2 +j} b_{\frac{k}{2}-1 +j}, b_{\frac{k}{2} +j}, a_{\frac{k}{2} +j},a_{\frac{k}{2}-1 +j})$, see \Cref{fig:extendround1}. Since all edges within $A$ and $B$ are part of the feasibility graph as well as all edges $(a_{j'},b_{j'})$ for $j' \in \{ 1, \ldots , \frac{n}{2}\}$ this is a feasible construction of a $C_k$-factor and thus an extension of the base tournament.

\begin{figure}[t]
\centering
\begin{tikzpicture}[scale=0.9]
    \draw (0,0) ellipse (4cm and 1cm);
    \draw (0,-2.5) ellipse (4cm and 1cm);
    \node[circle,fill=black,inner sep=0pt,minimum size=5pt] at (-2,0){};
        \node[left] at (-2,0) {$a_1$};
    \node[circle,fill=black,inner sep=0pt,minimum size=5pt] at (-1,-0.5){};
        \node[left] at (-1,-0.5) {$a_2$};
    \node[circle,fill=black,inner sep=0pt,minimum size=5pt] at (1,-0.5){};
           \node[right] at (1,-0.5) {$a_3$};
    \node[circle,fill=black,inner sep=0pt,minimum size=5pt] at (2,0){};
           \node[right] at (2,0) {$a_4$};
   \node[circle,fill=black,inner sep=0pt,minimum size=5pt] at (0,0.5){};
          \node[left] at (0,0.5) {$a_5$};
   
       \node[circle,fill=black,inner sep=0pt,minimum size=5pt] at (-2,-2.5){};
        \node[left] at (-2,-2.5) {$b_1$};
    \node[circle,fill=black,inner sep=0pt,minimum size=5pt] at (-1,-3){};
     \node[left] at (-1,-3) {$b_2$};
    \node[circle,fill=black,inner sep=0pt,minimum size=5pt] at (1,-3){};
         \node[left] at (1,-3) {$b_3$};
    \node[circle,fill=black,inner sep=0pt,minimum size=5pt] at (2,-2.5){};
         \node[right] at (2,-2.5) {$b_4$};
   \node[circle,fill=black,inner sep=0pt,minimum size=5pt] at (0,-2){};
         \node[left] at (0,-2) {$b_5$};
    
    \draw (-2,0) -- (-2,-2.5);
    \draw (-2,-2.5) -- (-1,-3);
    \draw (-1,-3) -- (-1,-0.5);
    \draw (-1,-0.5) -- (1,-0.5);
    \draw (1,-0.5) -- (1,-3);
    
        \draw (1,-3) -- (2,-2.5);
    \draw (2,-2.5) -- (0,-2);
    \draw (0,-2) -- (0,0.5);
    \draw (0,0.5) -- (2,0);
    \draw (2,0) -- (-2,0);
\end{tikzpicture}

\caption{An example of one cycle in the construction that is used for the extension of the base tournament.}\label{fig:extendround1}
\end{figure}
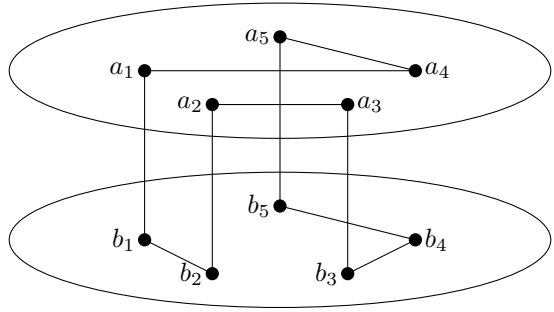

After the extension of the base tournament by one round the feasibility graph has the following structure. The degree of all vertices equals $\frac{n}{2}-2$ and the only edges between vertices from $A$ and $B$ are
\[\left\{(a_{\frac{k}{2}-1+j}, b_{\frac{k}{2}-1+j}) \mid j \in \left\{0, \frac{k}{2}, k, \dots, \frac{n}{2} - \frac{k}{2}\right\}\right\}\;.\]
We will construct one more round such that
after this round, there is only one of the matching edges remaining in the feasibility graph.

In order to do so, we will construct the $C_k$-factor with cycles $(C_1,\ldots, C_\frac{n}{k})$ by a greedy procedure as follows. Cycles $C_1, \dots, C_{\frac{n}{2k}-\frac{1}{2}}$ will all contain two matching edges and the other cycles none. In order to simplify notation we set
\[A_M = \left\{a_{\frac{k}{2}-1+j} \mid j \in \left\{0, \frac{k}{2}, k, \dots, \frac{n}{2} - \frac{k}{2}\right\}\right\}\;,\]
and $A_{-M} = A \setminus A_M$. We have $|A_{-M}| = \frac{n}{2}-\frac{n}{k}$. We define $B_M$ and $B_{-M}$ analogously. For some cycle $C_{z}$, $z\leq\frac{n}{2k}-\frac{1}{2}$, we greedily pick two of the matching edges. Let $(a_{\ell},b_{\ell})$ and $(a_r,b_r)$ be these two matching edges. To complete the cycle, we show that we can always construct a path from $a_{\ell}$ to $a_r$ by picking vertices from $A_{-M}$ and from $b_{\ell}$ to $b_r$ by vertices from $B_{-M}$. Assuming that we have already constructed cycles $C_1,\ldots, C_{z-1}$, there are still
\begin{align*}
    \frac{n}{2} - \frac{n}{k} - (z-1) \left(\frac{k}{2}-2\right)
\end{align*}
unused vertices in the set $A_{-M}$. Even after choosing some vertices for cycle $z$ the number of unused vertices in $A_{-M}$ is at least 
\[\frac{n}{2} - \frac{n}{k} - z \left(\frac{k}{2}-2\right) \geq \frac{n}{2} - \frac{n}{k} - z \frac{k}{2} \geq \frac{n}{2} - \frac{n}{k} - \frac{n}{2k} \frac{k}{2} = \frac{n}{4} - \frac{n}{k} \geq \frac{n}{12}\;.\]
Let $N(v)$ denote the neighborhood of vertex $v$. The greedy procedure that constructs a path from $a_{\ell}$ to $a_r$ works as follows. We set vertex $a_{\ell}$ active. For each active vertex $v$, we pick one of the vertices $a \in N(v) \cap A_{-M}$, delete $a$ from $A_{-M}$ and set $a$ active. We repeat this until we have chosen $\frac{k}{2}-3$ vertices. Next, we pick a vertex in $N(v) \cap A_{-M} \cap N(a_r)$ in order to ensure that the path ends at $a_r$. Since $|A_{-M}| \geq \frac{n}{12}$, we observe
\[|N(v) \cap A_{-M} \cap N(a_r)| \geq \frac{n}{12} - 1-2\;,\]  
so there is always a suitable vertex as $n\geq 9 k\geq 54$.
The construction for the path from $b_{\ell}$ to $b_r$ is analogous.

For cycles $C_{\frac{n}{2k}+\frac{1}{2}}, \dots, C_\frac{n}{k}$, there are still $\frac{n}{4}+\frac{k}{4}$ leftover vertices within $A$ and within $B$.
The degree of each vertex within the set of remaining vertices is at least $\frac{n}{4}+\frac{k}{4}-3$. This is large enough to apply the Aigner-Brandt theorem as $i\geq 9$ and $k\geq 6$.
In this way, we construct a $C_k$-factor in the feasibility graph. This means we can extend the tournament by one more round. In total we constructed a tournament of $\frac{n+2}{4}+1$ rounds, which is obviously equal to  $\lfloor \frac{n+2}{4} \rfloor +1$.

To see that this tournament cannot be extended further, consider the feasibility graph. Most of the edges within $A$ and $B$ are still present, while between $A$ and $B$ there is only one edge left. This means a $C_k$-factor can only consist of cycles that are entirely in $A$ or in $B$. Since $\abs{A}=\abs{B}$ and the number of cycles $\frac{n}{k}=i$ is odd, there is no $C_k$-factor in the feasibility graph and thus the constructed tournament is not extendable.
\end{proof}

\section{Conclusion and Outlook}
In this work, we studied the social golfer problem and the Oberwolfach problem from an optimization perspective. We presented bounds on the number of rounds that can be guaranteed by a greedy algorithm. 
For the social golfer problem the provided bounds are tight. Assuming El-Zahar's conjecture \citep{el1984circuits} holds, a gap of one remains for the Oberwolfach problem. This gives a performance guarantee for the optimization variant of both problems. Since both a clique- and cycle-factor can be found in polynomial time for graphs with high degree, the greedy algorithm is a $\frac{k-1}{2k^2-3k-1}$-approximation algorithm for the social golfer problem and a $\frac{1}{3+\epsilon}$-approximation algorithm for any fixed $\epsilon>0$ for the Oberwolfach problem.

Given some tournament it would be interesting to analyze the complexity of deciding whether the tournament can be extended by an additional round. Proving \NP-hardness seems particularly complicated since one cannot use any regular graph for the reduction proof, but only graphs that are feasibility graphs of a tournament.

Lastly, the general idea of greedily deleting particular subgraphs $H$ from base graphs $G$ can also be applied to different choices of $G$ and $H$.

\section*{Acknowledgement}
This research started after supervising the Master's thesis of David Kuntz. We thank David for valuable discussions. 

\bibliographystyle{apalike}
\bibliography{roundrobin}

\end{document}